\numberwithin{equation}{section}
\theoremstyle{plain}
                          \newtheorem{thm}{Theorem}
                          \newtheorem{prop}[thm]{Proposition}
\theoremstyle{remark}         
\theoremstyle{definition} \newtheorem{defn}{Definition}
                          \newtheorem{exam}{Example}
\newcommand\om{\Omega}
\newcommand\real{\mathbb{R}}
\newcommand\A{\mathcal{A}}
\newcommand\B{\mathcal{B}}
\newcommand\R{\mathcal{R}}
\newcommand\J{\mathcal{J}}
\newcommand\bj{\bm{J}}
\newcommand\bx{\bm{x}}
\newcommand\bn{\bm{n}}
\newcommand\balp{\bm\alpha}
\newcommand\blam{\bm\lambda}
\newcommand\bthe{\bm\theta}
\newcommand\Var{{\rm Var}}
\newcommand\bX{\bm{X}}
\newcommand\by{\bm{y}}
\titleformat{\section}{\large\bfseries}{\thesection}{1em}{} 
\titleformat{\subsection}{\normalsize\bfseries}{\thesubsection}{1em}{} 
\providecommand{\MR}[1]{} 
\def\I{{\mathbf{1}}}
\begin{document}

\title{\Large Coupling optional P\'olya trees and the two sample problem}
\author{\large Li Ma\footnote{Department of Statistics, 390 Serra Mall, Stanford,
    CA 94305 / EM: \texttt{ma2@stanford.edu}} 
  \and \large Wing H. Wong\footnote{Department of Statistics, 390 Serra
    Mall, Stanford, CA 94305 / EM: \texttt{whwong@stanford.edu}} 
}
\doublespacing

\maketitle
\vspace{-1em}

\begin{abstract}
  Testing and characterizing the difference between
  two data samples is of fundamental interest in statistics. Existing methods such as Kolmogorov-Smirnov and
  Cramer-von-Mises tests do not scale well as the dimensionality
  increases and provide no easy way to characterize the difference
  should it exist. In this work, we propose a theoretical
  framework for inference that addresses
  these challenges in the form of a prior for Bayesian nonparametric analysis.
  The new prior is constructed based on a random-partition-and-assignment procedure
  similar to the one that defines the standard optional P\'olya tree
  distribution, but has the ability to generate multiple random
  distributions jointly. These random probability
  distributions are allowed to ``couple'', that is to have the same
  conditional distribution, on subsets of the sample
  space. We show that this ``coupling optional P\'olya tree'' prior
  provides a convenient and effective way for
  both the testing of two sample difference and the learning of the
  underlying structure of the difference. In addition, we discuss
  some practical issues in the computational implementation of this prior and
  provide several numerical examples to demonstrate its work. 
  \end{abstract}

\section{Introduction}\label{sec:intro}
Two sample comparison is a fundamental problem in
statistics. With two samples of data at hand, one
often wants to answer the question---``Did these two samples
come from the same underlying distribution?'' In other words, one
is interseted in testing the null hypothesis that
the two data samples were generated from the same
distribution. Moreover, in the presence of evidence for deviation
between the two samples, one often hopes to learn
the structure of such difference in order to understand, for example, what factors
could have played a role in causing the difference. Hence two sample comparison is
interesting both as a hypothesis testing problem and as a data mining
problem. In this work, we consider the problem from both
aspects, and develop a Bayesian nonparametric approach that can serve both the
testing and the learning purposes.

Nonparametric hypothesis testing for two sample difference has a long
history and rich literature, and many methods have been proposed. Some well-known
examples include Wilcoxon test \cite[p.243]{lehmannandromano:2006}, Kolmogorov-Smirnov test
\cite[pp. 392--394]{chakravartistat} and Cramer-von-Mises test \cite{anderson:1962}. Recently, this 
problem has also been investigated
from a Bayesian nonparametric perspective using a P\'olya tree prior
\cite{holmes:2009}.

Despite the success of these existing testing methods for
one-dimensional problems, two sample comparison in multi-dimensional spaces remains a challenging task. 
A basic idea for many existing methods is to
estimate the two underlying distributions, and then use a distance metric to measure the
dissimilarity between the two estimates. Tests such as
Kolmogorov-Smirnov (K-S)
and Cramer-von-Mises (CvM) fall into this category. However, reliably
characterizing distributions in multi-dimensional problems, if
computationally feasible at all, often requires a
prohibitively large number of data points. With even just a moderate
number of dimensions, the estimated distributional distance is
often highly variable or biased.
This ``curse of dimensionality'' demonstrates itself in the Bayesian
setting as well. 
This is true even when the underlying difference is
structurally very simple and can be accounted for by a relatively
small number of dimensions in the space.

One general approach to dealing with the curse of dimensionality when
characterizing distributions in a multi-dimensional space is to
learn from the data a partition of the
space that best reflects the underlying
structure of the distribution(s). A good partition of the space overcomes the sparsity of the
data by placing true neighbors together, and it reduces
computational burden by allowing one to focus on the relevant blocks
in the space. Hence it can be very helpful in multi-dimensional, and
especially high-dimensional,
settings to incorporate the learning of
a representative partition of the space into the inference procedure.
Wong and Ma \cite{wongandma:2010} adopted this idea and introduced the optional
P\'olya tree (OPT) prior as such a method under the Bayesian
nonparametric framework. Through
optional stopping and randomized splitting of the sample space, a 
recursive partitioning procedure is incorporated into the parametrization of
this prior, thereby allowing the data to suggest
 parsimonious divisions of the space. The OPT prior, like other
 existing Bayesian
nonparametric priors, deals with only one data
sample, but as we will demonstrate in this paper, similar ideas can be
utilized for problems involving more than one sample as well.

Besides the difficulty in handling multidimensional problems,
existing nonparametric methods for two sample comparison are also
unsatisfactory in that they 
provide no easy way to learn the underlying
structure of the difference should it exist. Tests such as K-S and CvM
provide statistics with which to test for the existence of a
difference, but does not allow one to characterize the
difference---for example what variables are involved in the difference
and how. One has to resort to methods such as logistic regression that
rely on strong modelling
assumptions to investigate such structure. Similarly, Bayes factors
computed using nonparametric priors such as Dirichlet
process mixture and the P\'olya tree prior also shed no light on where the
evidence for difference has arisen.   

In this work, we introduce a new prior called ``coupling optional P\'olya
tree'' (co-OPT) designed for Bayesian nonparametric inference on the two sample
problem. This new prior {\em jointly} generates two random
distributions through a random-partitioning-and-assignment procedure
similar to the one that gives rise to the OPT prior
\cite{wongandma:2010}. The co-OPT framework allows
both hypothesis testing on the null hypothesis and posterior learning
of the distributional difference in terms of a partition of the
space that ``best'' reflects the difference structure. The ability to make
posterior inference on a partition of the space also enhances the
testing power for multi-dimensional problems.

This paper is organized as follows. In Section~2 we
review the construction of the OPT distribution. In Section~3 we
generalize the definition of the OPT distribution by replacing the
``uniform base measure'' (defined later) with a general absolutely continuous distribution, and
show that this generalized prior can be used for investigating the
goodness-of-fit of the data to the base distribution. 
In Section~4 we introduce the co-OPT prior and show how Bayesian inference
can be carried out using this prior. In addition, we discuss the practical issues in
implementing inference using this prior. In Section~5 we provide
several numerical examples 
to illustrate inference on the two sample comparison problem using
this prior and compare it to other methods. Then in Section~6 we present a method for
inferring two common distributional distances, $L_1$ and
Hellinger, between the two sample distributions 
using a co-OPT prior and provide two more numerical examples. Section~7 concludes with a few remarks.  

We close this introduction with a few words on the recent development
in the Bayesian nonparametric literature on related topics. In the
past 15 years, several methods have been proposed for testing
the one sample goodness-of-fit, in particular, for non-parametric alternatives against a
parametric null. For some examples see \cite{florens1996, carota1996,
  berger2001, basu2003, hanson2006, mcvinish2009, tokdar2010}. As for
two sample comparison, Holmes
{\em et. al.} \cite{holmes:2009} introduced a way to compute the
Bayes factor for testing the null through  
the marginal likelihood of the data with P\'olya tree priors. Under the
null, they model
the two samples to have come from a single random measure distributed
as a P\'olya tree, while under
the alternative from two separate P\'olya tree distributions. In contrast, our
new prior allows the two distributions to be generated {\em jointly}
through one prior even when they are different. It
is this joint generation that allows both the testing of the
difference and the learning of the structure simultaneously. There are
other approaches to joint modeling of multiple distributions in the
Bayesian nonparametric literature. For example, one idea is to
introduce dependence structure into
Dirichlet processes \cite{maceachern1999}. For some notable examples
see \cite{muller2004, teh2006,
  griffin2006}, among many others. Compared to these methods based on
Dirichlet processes, our method, based on the optional P\'olya tree,
allows the resolution of the inference to be adaptive to the data
structure and handles the sparsity in multidimensional settings using
random partitioning \cite{wongandma:2010}. Moreover, our method allows direct inference on the distributional difference without relying on
inferring the two distributions {\em per se}, making it particularly
suited for comparison across multiple samples. This point will be
further discussed in Section 4.2 and illustrated in the
examples given in Sections 5 and 6.

\section{Optional P\'olya trees and Bayesian inference}\label{sec:OPT}
Wong and Ma \cite{wongandma:2010} introduced the optional P\'olya tree (OPT) distribution as an
extension to the P\'olya tree prior that allows optional stopping and
randomized partitioning of the sample space $\om$, where $\om$ is
either finite or a rectangle in an Euclidean space. One can think of
this prior
as a procedure for generating random probability measures on $\om$
that consists of two components---(1) random partitioning of the space and
(2) random probability assignment into the parts of the space produced
by the partitioning.

We first review how the OPT prior randomly partitions the
space. Let $\R$ denote a partition rule function which, for any
subset $A$ of $\om$, defines a number of ways to partition $A$ into a
finite number of smaller sets. 
For example, for $\om=[0,1] \times [0,1]$, the (coordinate-wise) diadic split rule $\R$ is that
$\R(A)$=\{ splitting $A$ in the middle of the range of one of the
coordinates $x_j$, $j=1$ or 2\} if $A$ is a
non-empty rectangle and = $\emptyset$ otherwise. We call a rule
function $\R$ finite if $\forall A \subset \om$, the number of
possible ways to
partition $A$, $M(A)$, as specified by $\R$, is finite. In the
rest of the paper, we will only consider
finite partition rules. Let $K^j(A)$ be the number of children specified by the
$j$th way to partition $A$ under $\R(A)$, and let $A^j_i$ denote the
$i$th child set of $A$ in that way of partitioning. That is,
$A=\cup_i^{K^j(A)}A^j_i$ for $j=1,2,\ldots, M(A)$. We can write $\R(A)$ 
as
\begin{align*}
\R(A) & =\{ \{A^1_1, A^1_2, \ldots A^1_{K^1}\}, \{A^2_1, A^2_2,
\ldots, 
A^2_{K^2}\},\ldots  \{A^{M}_1, A^{M}_2, \ldots
A^{M}_{K^{M}}\} \}=\left\{ \{ A^j_i\}_{i=1}^{K^j} \right\}_{j=1}^{M},
\end{align*}
where for simplicity we suppressed notation by writing $M$ for $M(A)$ and $K$ for $K(A)$.

A partition rule function $\R$ does not specify any particular partition on
$\om$ but rather a collection of possible partitions over which one
can draw random samples. The OPT prior samples from this collection of partitions
in the following sequential
way. Starting from the whole space $A=\om$. If $M(A)=0$, then $A$ is
not divisible under $\R$ and we call
$A$ an atom (set). In this case the partitioning of $A$ is completed. If
$M(A)>0$, that is, $A$ is divisible, then a Bernoulli($\rho(A)$) random
variable $S(A)$ is drawn. If $S(A)=1$, we stop partitioning
$A$. Hence $S(A)$ is called the stopping variable for $A$, and
$\rho(A)$ the stopping probability.
If $S(A)=0$, $A$ is divided in the $J(A)$th of the $M(A)$ available ways
for partitioning $A$ under $\R(A)$, where $J(A)$ is a random variable
taking values $1,2,\ldots, M(A)$ with probabilities $\lambda_1(A),
\lambda_2(A), \ldots, \lambda_{M(A)}(A)$ respectively, and $\sum_{j=1}^{M(A)}\lambda_j(A)=1$.   
$J(A)$ is hence called the (partition) selector variable, and $\blam(A)=(\lambda_1(A),
\lambda_2(A), \ldots, \lambda_{M(A)}(A))$ the (partition) selector
probabilities. If $J(A)=j$, we partition $A$ into $\{A^j_1, A^j_2,
\ldots A^j_{K^j(A)}\}$, and then apply the same procedure to each of
the children. In addition, if $A$ is reached from $\om$ after $k$
steps (or levels) of recursive partitioning, then we say that $A^j_i$
is reached after $k+1$ steps (or levels) of recursive partitioning. (To complete this inductive
definition, we say that the
space $\om$ is reached after 0 steps of recursive partitioning.) The
recursive partitioning procedure naturally gives rise to a tree
structure on the sample space. For this reason, we shall also refer to the
sets $A$ that arise during the precedure as (tree) nodes.

The first question that naturally arises is whether this
sequential procedure will eventually ``stop'' and produce a well
defined partition on $\om$. Given that the
stopping probability $\rho(A) > \delta$ for some $\delta$ and all $A$,
this is indeed true 
in the following sense. If we let $\mu$ be the natural measure on
$\om$---the Lebesgue measure if $\om$ is a rectangle in
an Euclidean space or the counting measure if $\om$ is finite, then
$\mu(T_1^k) \rightarrow 0$ with probability 1, where $T_1^k$ is the part of $\om$
that is still not stopped after $k$ steps of recursive partitioning. In other words, the
partitioning procedure will stop
almost everywhere on $\om$.

The second component of the OPT prior is random probability
assignment. The prior assigns probability mass into
the randomly generated parts of the space in the following manner. Starting from $A=\om$, assign
$Q(A)=1$ total probability to $A$. If $A$ is stopped or is an atom, then let the
conditional distribution within $A$ be uniform. That is, $Q(\cdot |
A)=u(\cdot |A)$, where $u$ denotes the uniform density (w.r.t.\!
$\mu$) and this completes the probability assignment on
$A$. If instead $A$ has children $\{A^j_1, A^j_2,
\ldots A^j_{K^j(A)}\}$, (this occurs when $S(A)=0$ and $J(A)=j$,) a random vector $(\theta^j_{1}(A), \theta^j_{2}(A),
\ldots, \theta^j_{K^j(A)}(A))$ on the $K^j(A)-1$ dimensional
simplex is drawn from
a Dirichlet($\alpha^j_{1}(A),\alpha^j_{2}(A),\ldots,\alpha^j_{K^j(A)}(A)$)
distribution, and we assign to each child $A^j_i$ probability mass
$Q(A^j_i)=Q(A)\theta^j_i(A)$. We call $\bthe^j(A)= (\theta^j_{1}(A), \theta^j_{2}(A),
\ldots, \theta^j_{K^j(A)}(A))$ the (probability) assignment
vector, and $\balp^j(A)=(\alpha^j_{1}(A),\alpha^j_{2}(A),\ldots,\alpha^j_{K^j(A)}(A))$
the pseudo-count parameters. Then we go to the next level and assign
probability mass within each of the children in the same way.

Theorem~1 in \cite{wongandma:2010} shows that if $\rho(A) > \delta$ for
some $\delta>0$ and all $A$, then with probability 1 this random partitioning and
assignment procedure will give rise to a probability measure $Q$ on
$\om$ that is absolutely continuous with respect to $\mu$. This random
measure $Q$ is said to have an OPT distribution with (partition rule
$\R$ and) parameters $\rho$, $\blam$ and $\balp$, which can be written
as $OPT(\R;\rho,\blam,\balp)$. In addition, Wong and Ma
\cite{wongandma:2010} also show that under mild conditions, this prior
has large
support---any $L_1$ neighborhood of an absolutely continuous distribution (w.r.t. $\mu$) on
$\om$ has positive prior probability.

Two key features of the prior are demonstrated from the above constructive
description. The first is
self-similarity. If a set $A$ is
reached as a node during the recursive partitioning procedure, then the continuing partitioning and assignment
{\em within} $A$, which specifies the conditional distribution on $A$,
is just an OPT procedure with $\om = A$. The second
feature is the prior's implicit hierarchical structure. To see this,
we note that the random distribution that arises from such a prior is
completely determined by the partition and assignment variables $S$,
$J$, and $\bthe$, while the prior parameters
$\rho$, $\blam$ and $\balp$ specify the distributions of these
``middle'' variables. 

These two features allow one to write down a recursive formula for the
likelihood under a random distribution arising from such a prior. To
see this, first let $Q$ (with density $q$) be a distribution arising from an 
$OPT(\R;\rho,\blam,\balp)$ distribution, and for $A \subset \om$, let
$q(\cdot |A)$ be the conditional density on $A$. Let $S$, $J$, and $\bthe$ be the
corresponding partition and assignment variables for $Q$ (or $q$). Suppose one has $n$ i.i.d. observations,
$x_1$, $x_2$, \ldots, $x_n$, on $\om$ from $q(\cdot | \om)$. Define
\[
\bx(A) = \{ x_1, x_2, \ldots, x_n \} \cap A,
\]
the observations falling in $A$, and $n(A)=|\bx(A)|$, the number
of observations in $A$. Then for any node $A$ reached in the
recursive partitioning process determined by the $S$ and $J$ variables, the likelihood of 
observing $\bx(A)$ conditional on $A$ is
\begin{equation}
q\left(\bx(A) | A \right)= S u\left( \bx(A)|A
\right)+(1-S)\left(\prod_{i=1}^{K^{J}}\left(\theta_i^J\right)^{n(A_i^J)}\right)
\left(\prod_{i=1}^{K^J}q\left(\bx\left(A_i^J\right)\big|A_i^J\right)\right),
\label{eq1}
\end{equation}
where $u(\bx(A)|A)=\frac{1}{\mu(A)^{n(A)}}$ is the likelihood under
the uniform distribution on $A$,
$S=S(A)$, $J=J(A)$, $K^J=K^{J(A)}(A)$, and
$\theta_i^J=\theta_i^{J(A)}(A)$. (Note that for this formula to hold we need
to define
$q\left(\emptyset|A \right) := 1$.) From now on we will always suppress the ``(A)''
notation for the random variables and the parameters where
this adds no confusion. Similarly, we will use $q(\bx|A)$
and $u(\bx|A)$ to mean $q(\bx(A)|A)$ and $u(\bx(A)|A)$,
respectively.

Integrating out $S$, $J$, and $\bthe$ in $\eqref{eq1}$, we get the
corresponding recursive representation of the marginal likelihood
\begin{equation}
P(\bx | A) =
\rho u(\bx|A)+(1-\rho)\sum_{j=1}^M\lambda_j\,\frac{D(\bn^j+\balp^j)}{D(\balp^j)}\prod_{i=1}^{K^j}P\left(\bx|A_i^j\right), 
\label{eq2}
\end{equation}
where $P(\bx|A)=P(\bx(A)|A)$, $\bn^j=\bn^j(A)=(n(A^j_1),n(A^j_2), \ldots, n(A^j_{K^j(A)}))$, and $D(\bm{t})=\Gamma(t_1)\dots\Gamma(t_k)/\Gamma(t_1+\dots+t_k)$.
Wong and Ma \cite{wongandma:2010} provide terminal conditions so that
\eqref{eq2} can be used to compute the marginal
likelihood conditional on $A$, $P(\bx|A)$, for all potential tree nodes
$A$ determined by $\R$.

The final result we review in the section is the conjugacy of the OPT
prior. More specifically, given the i.i.d. observations $\bx$, the
posterior distribution of $Q$ is again an OPT distribution with
\begin{enumerate}
\item Stopping probability:
$\quad \rho(A|\bx)=\rho(A)u(\bx|A)\big/P(\bx|A)$

\item Selection probabilities:
\begin{equation*}
\lambda_j(A|\bx)\propto\lambda_j(A)\,\frac{D(\bn^j+\balp^j)}{D(\balp^j)}\prod_{i=1}^{K^j}P\left(\bx|A_i^j\right)
\qquad \text{for $j=1,\ldots,M(A)$}
\end{equation*}
\item Probability assignment pseudo-counts:
$\quad \alpha^j_i(A|\bx) = \alpha^j_i(A) + n(A^j_i)$\\
for $j=1,\ldots,M(A)$ and $i=1,2,\ldots, K^j(A)$
\end{enumerate}
where again $A$ is any potential node determined by
the partition rule function $\R$ on $\om$.

\section{Optional P\'olya trees and 1-sample ``goodness-of-fit''}\label{sec:gOPT}
In the constructive procedure for an OPT distribution described above,
whenever a node $A$ is stopped, the conditional distribution within
it is generated from that of a baseline distribution, namely the uniform
$u(\cdot |A)$. For this reason, we say that the collection of
conditional uniform
distributions, $\{ u(\cdot |A) : A \text{ is a potential node under }
\R \}$, are the {\em local} base
  measures. With uniform local base measures, the stopping
probability $\rho$ for a region $A$ represents the
probability that the distribution is ``flat'' within $A$. Accordingly,
the posterior OPT concentrates probability mass around
partitions that best captures the ``non-flatness'' in the density of the
data distribution. Such a partitioning criterion is most natural in
the context of density estimation.

One can extend the original OPT construction by adopting different
local base measures or stopping criteria for the nodes. More
specifically, we can replace $u(\cdot|A)$ with any absolutely continuous
measure $m^A(\cdot)$ on node $A$ in the probability assignment
step. That is, when a tree node $A$ is
stopped, we let the conditional distribution in $A$ be
$m^A(\cdot)$. With this generalization, the recursive constructive procedure for
the OPT distribution and the recipe for Bayesian inference
described in the previous section still follow
through. 

One choice of the $m^A$ measures
is of particular interest. Specifically, we can let
$m^A(\cdot)=q_0(\cdot |A)$ for some 
absolutely continuous distribution $Q_0$ with density $q_0$ on
$\om$. For this special case, we have the following definition.

\begin{defn}
The random probability measure $Q$ that arises from the
random-partitioning-and-assignment (RPAA) procedure described in the previous
section, with $u$ replaced by $q_0$, the density (w.r.t.\! $\mu$) of an absolutely
continuous distribution $Q_0$, is
said to have an optional P\'olya tree distribution on $\R$ with
parameters
$\blam$, $\balp$, $\rho$, and (global) base measure (or distribution) $Q_0$. We denote this
distribution by
$OPT(\R; \blam, \balp, \rho; Q_0)$.
\end{defn}

The next theorem shows that by choosing an appropriate
partitioning rule $\R$ and/or suitable pseudocount parameters
$\balp$, one can enforce the random distribution $Q$ to ``center
around'' the base measure $Q_0$.

\begin{thm}
If $Q \sim OPT(\R; \rho,\blam,\balp; Q_0)$, where
$\delta<\rho(A)$ for some $\delta$ and all potential tree nodes $A$, then $\forall$ Borel set $B$, 
\[
EQ(B)=Q_0(B),
\]
provided that for all $A$, $j=1,2,\ldots,M(A)$ and $i=1,2,\ldots,K^j(A)$,
we have 
\[ \alpha^j_i(A)/\sum_{h=1}^{K^j(A)}\alpha^j_h(A)=Q_0(A^j_i)/Q_0(A).\]
\label{thm:mean}
\end{thm}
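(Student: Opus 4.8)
The plan is to prove the more general claim that for every potential tree node $A$ and every Borel set $B$,
\begin{equation*}
E\big[\tilde Q_A(B)\big]=Q_0(B\,|\,A):=Q_0(B\cap A)/Q_0(A),
\end{equation*}
where $\tilde Q_A$ denotes a random probability measure on $A$ distributed as the OPT on $A$ with base measure $Q_0(\cdot\,|\,A)$ and with the parameters inherited on the subtree rooted at $A$; the theorem is the special case $A=\om$, for which $\tilde Q_\om=Q$ and $Q_0(\cdot\,|\,\om)=Q_0$. Here I read the hypothesis as tacitly requiring $Q_0(A)>0$ at every potential node $A$, so that $Q_0(\cdot\,|\,A)$ and the ratios $Q_0(A^j_i)/Q_0(A)$ are well defined, with the convention $Q_0(\emptyset\,|\,A):=0$.

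First I would write down the one-step identity that the self-similarity of the RPAA construction supplies: for a divisible node $A$ (i.e.\ $M(A)>0$),
\begin{equation*}
\tilde Q_A(\cdot)=S(A)\,Q_0(\cdot\,|\,A)+\big(1-S(A)\big)\sum_{j=1}^{M(A)}\I\{J(A)=j\}\sum_{i=1}^{K^j(A)}\theta^j_i(A)\,\tilde Q_{A^j_i}(\cdot),
\end{equation*}
in which $S(A)$, $J(A)$, $\bthe^j(A)$ and the measures $\{\tilde Q_{A^j_i}\}_i$ are independent and each $\tilde Q_{A^j_i}$ is again an OPT on $A^j_i$ with base $Q_0(\cdot\,|\,A^j_i)$; at an atom $A$ (where $M(A)=0$) one simply has $\tilde Q_A=Q_0(\cdot\,|\,A)$. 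Because $\R$ is finite all sums here are finite, so taking expectations is mere linearity; using $E\,S(A)=\rho(A)$, $P(J(A)=j)=\lambda_j(A)$, $E\,\theta^j_i(A)=\alpha^j_i(A)/\sum_h\alpha^j_h(A)$ (Theorem~1 of \cite{wongandma:2010} guarantees each $\tilde Q_{A^j_i}$ is a.s.\ a probability measure under $\rho>\delta$, so these expectations are meaningful), independence, and the hypothesis $\alpha^j_i(A)/\sum_h\alpha^j_h(A)=Q_0(A^j_i)/Q_0(A)$, I obtain, with $V_A(B):=E[\tilde Q_A(B)]$,
\begin{equation*}
V_A(B)=\rho(A)\,Q_0(B\,|\,A)+\big(1-\rho(A)\big)\sum_{j=1}^{M(A)}\lambda_j(A)\sum_{i=1}^{K^j(A)}\frac{Q_0(A^j_i)}{Q_0(A)}\,V_{A^j_i}(B).
\end{equation*}

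Next I would observe that $V_A(B)=Q_0(B\,|\,A)$ solves this recursion: since $\{A^j_i\}_{i=1}^{K^j(A)}$ partitions $A$, $\sum_i \frac{Q_0(A^j_i)}{Q_0(A)}Q_0(B\,|\,A^j_i)=\frac{Q_0(B\cap A)}{Q_0(A)}=Q_0(B\,|\,A)$ for each $j$, and since $\sum_j\lambda_j(A)=1$ the right-hand side collapses to $Q_0(B\,|\,A)$. The delicate point, which I expect to be the main obstacle, is \emph{uniqueness}: the recursion expresses $V_A$ only through the $V$'s at strictly deeper nodes, and since the tree can be infinitely deep there is no base level from which to induct. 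This is exactly where the uniform bound $\rho(A)>\delta$ is needed. Putting $e_A(B):=V_A(B)-Q_0(B\,|\,A)$ and subtracting the fixed-point identity from the recursion gives, for divisible $A$,
\begin{equation*}
e_A(B)=\big(1-\rho(A)\big)\sum_{j=1}^{M(A)}\lambda_j(A)\sum_{i=1}^{K^j(A)}\frac{Q_0(A^j_i)}{Q_0(A)}\,e_{A^j_i}(B),
\end{equation*}
with $e_A(B)=0$ at atoms.

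Finally, since $V_A(B),Q_0(B\,|\,A)\in[0,1]$ the number $s(B):=\sup_A|e_A(B)|$ (supremum over all potential nodes) is finite, and because the coefficients $\lambda_j(A)Q_0(A^j_i)/Q_0(A)$ are nonnegative and sum to $1$, the last display yields $|e_A(B)|\le(1-\rho(A))\,s(B)\le(1-\delta)\,s(B)$ for every $A$; taking the supremum over $A$ forces $s(B)\le(1-\delta)\,s(B)$, hence $s(B)=0$ because $\delta>0$. In particular $e_\om(B)=0$, i.e.\ $EQ(B)=V_\om(B)=Q_0(B\,|\,\om)=Q_0(B)$, which is the assertion. (One could reach the same conclusion by unrolling the recursion $k$ times, whereupon the accumulated ``stopped'' terms telescope to $Q_0(B)$ via the partition property while the residual is at most the expected $Q$-mass left unstopped after $k$ levels, which is $\le(1-\delta)^k\to0$ by the same reasoning that gives $\mu(T_1^k)\to0$.)
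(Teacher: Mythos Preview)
Your argument is correct, and it takes a genuinely different route from the paper's. The paper proceeds by a forward induction on the truncation level: it introduces the ``forced-stop'' measures $Q^{(k)}$, shows by induction on $k$ that $E\big(Q^{(k)}(B)\mid \J^{(k)}\big)=Q_0(B)$ for every $k$ (conditioning on all partition variables up to level $k$ and using the Dirichlet-mean hypothesis at the inductive step), and then passes to the limit via $|Q^{(k)}(B)-Q(B)|\to 0$ a.s.\ and bounded convergence. Your approach is instead a fixed-point/contraction argument: you write down the self-similar recursion for $V_A(B)=E[\tilde Q_A(B)]$, verify that $Q_0(B\,|\,A)$ is a solution, and then use the uniform bound $\rho>\delta$ to force uniqueness via $s(B)\le(1-\delta)s(B)$. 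Your parenthetical ``unrolling $k$ times'' remark is essentially the paper's proof in outline.

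The trade-offs: your contraction argument is slightly cleaner in that it avoids constructing the auxiliary sequence $Q^{(k)}$ and invoking the a.s.\ convergence $Q^{(k)}\to Q$ from \cite{wongandma:2010}; the role of $\rho>\delta$ is also made very transparent as exactly the contraction constant. The paper's approach, on the other hand, is more constructive and dovetails naturally with the existence theorem (which already builds $Q$ as the limit of $Q^{(k)}$), so no new machinery is introduced. Both are short and rely on the same structural ingredients: self-similarity, the Dirichlet-mean condition, and the uniform lower bound on $\rho$.
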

\vspace{-2em}

\begin{proof} 
See supplementary materials.
\end{proof}
\noindent Remark: If we have equal pseudocounts, that is,
$\alpha^j_1(A)=\alpha^j_2(A)=\cdots=\alpha^j_{K^j(A)}(A)$ for all
potential nodes $A$ and all $j$, then the condition for the theorem becomes
$Q_0(A^j_i)/Q_0(A)=1/K^j(A)$. Therefore one can choose a partition
rule $\R$ on $\om$ based on the base measure to center the prior.

Bayesian inference using the OPT prior with a general base measure can be carried out
just as before, provided we replace $u(\bx|A)$ replaced by $q_0(\bx|A)$ everywhere. 
An important fact is that a random distribution
with this prior has positive probability to be {\em
  exactly} the same as the base distribution. 
 Therefore, one can think of the inferential procedure for the OPT
prior as a sequence of
recursive comparison steps to the base measure. More specifically, the partitioning decision on
each node $A$ is determined by comparing the conditional likelihood
of the data within $A$ under $Q_0$ to the composite of $M(A)$
composite alternatives. The partition of
each node $A$ stops when the observations in $A$ ``fits'' the structure
of the base measure, and the posterior values of the partitioning
variables capture the
discrepancy, if any, between the data and the base. Consequently, this
framework can be used to recursively test for 1-sample {\em
  goodness-of-fit} and to learn the structure of any potential ``misfit''. For each node $A$, the
posterior stopping
probability is the probability that the data distribution 
coincides with the base
distribution conditional on $A$. In particular, the posterior
stopping probability for the whole space $\om$, $\rho(\om)$, measures how
well the observed data fit the base overall. The posterior
values of the other partitioning and pseudocount variables reflect where and how the data distribution
differs from the base.

\section{Coupling optional P\'olya trees and two sample comparison}\label{sec:co-OPT}
In this section we consider the case when two i.i.d. samples are observed and
one is interested in testing and characterizing the potential difference between
the underlying distributions. From now on, we let $Q_1$ and $Q_2$, with densities $q_1$ and $q_2$, be the two 
distributions from which the two samples have come from. 

\subsection{Coupling optional P\'olya trees}
A conceptually simple way to compare $Q_1$ and $Q_2$ is to proceed
in two steps---first estimate
the two distributions {\em separately}, and then use some distance metric to quantify the
difference. For example, one can place an OPT prior
on each of $Q_1$ and $Q_2$ and use the posteriors to estimate the
densities \cite{wongandma:2010}. (Other density estimators can also be
used for this purpose.) With the density estimates available,
one can then compute standard distance metrics such as $L_1$, and in
turn use this as a statistic for testing the difference. (This approach
provides no easy way to characterize how the two distributions are different.)

However, this two-step method is
undesirable in multidimensional, and especially high-dimensional, settings. The main reason
is that reliably estimating
multidimensional distributions 
is a very difficult problem, and in fact often a much harder
problem than comparing distributions. This difficulty in turn
translates into either high variability or large bias in the 
distance estimates, and thus low statistical power. Using this approach, one is essentially making
inference on the distributional difference {\em indirectly}, through the inference on
a large number of parameters that characterize the two distributions
{\em per se} but have little to do with their difference. 

Following this reasoning, it is favorable to make {\em direct} inference on
``parameters'' that capture the distributional difference. But such
direct inference requires (from a Bayesian perspective) that  the two
distributions be generated from a {\em joint}
prior. This prior should be so designed that in the corresponding joint posterior,
information regarding the distributional difference can be extracted
directly. We next introduce such a prior. 

Our proposed method for generating the two distributions $Q_1$ and $Q_2$ is again based on
a procedure that randomly partitions the space $\om$ and assigns probability
masses into the parts, similar to the one that defines the OPT prior. What differs from the procedure for the OPT is that we add in an extra random component---the conditional coupling of the two measures $Q_1$ and $Q_2$ within the tree nodes. We next explain this construction in detail. 
Starting from the whole space $A=\om$, we draw a random variable 
\vspace{-1em}

\[ C(A)\sim Bernoulli(\gamma(A)),\] 
which we call the coupling variable. If $C(A)=1$, then we force $Q_1$ and $Q_2$ to be coupled conditional on $A$---that is, $Q_1(\cdot |A)=Q_2(\cdot |A)$---and we achieve this by generating a common conditional distribution from a stanford OPT on $A$. That is
\[ Q_1(\cdot |A) = Q_2(\cdot | A) \sim OPT_{|A}(\R; \rho, \blam^{b}, \balp^{b}), \] 
where the ``b'' superscript stands for ``base'', and the ``$|A$" notation should be understood as the restriction to $A$ of the partition rule $\R$, the stopping variables $\rho$, the partition selector variables $\blam^b$, and the assignment pseudo-count variables $\balp^b$.
(For $A=\om$, there is no restriction.) If $C(A)=0$, then we draw a partition selector variable 
\[J(A) \in \{1,2,\ldots, n\} \quad \text{with $P(J(A)=j)=\lambda^{j}(A).$}\]
If $J(A)=j$, then we partition $A$ under the $j$th way according to $\R(A)$. Then draw two independent assignment vectors 
\begin{align*} 
\bthe^j_1(A)=(\theta^j_{11}(A), \theta^j_{12}(A), \ldots, \theta^j_{1K^j(A)}(A)) &\sim Dirichlet(\alpha^j_{11}(A),\alpha^j_{12}(A),\ldots,\alpha^j_{1K^j(A)}(A))\\
\bthe^j_2(A)=(\theta^j_{21}(A), \theta^j_{22}(A), \ldots, \theta^j_{2K^j(A)}(A)) &\sim Dirichlet(\alpha^j_{21}(A),\alpha^j_{22}(A),\ldots,\alpha^j_{2K^j(A)}(A)),\\
\end{align*}
\vspace{-4em}

\noindent and let 
\[
Q_1(A^j_i)=Q_1(A)\theta^j_{1i}(A) \quad  \text{and} \quad
Q_2(A^j_i)=Q_2(A)\theta^j_{2i}(A)
\]
for each child $A^j_i$ of $A$. We call $\bthe^j_1(A)$ and
$\bthe^j_2(A)$ the assignment vectors for $Q_1$ and $Q_2$ (in the {\em
  uncoupled} state). Then we go down one level and repeat the entire procedure for each $A^j_i$, starting from the drawing of the coupling variable.

Again, the first natural question to ask is whether this procedure will
actually stop and give rise to two random probability measures
$(Q_1,Q_2)$. The answer is positive under very mild conditions, and
this is formalized in Theorem~\ref{thm:coOPT}. 
The statement of the theorem uses the notion of ``forced coupling'', which is similar
to the idea of ``forced stopping'' used in the proof of
Theorem~\ref{thm:mean} and which we describe next. Let $(Q_1^{(k)},Q_2^{(k)})$ denote the pair of random distributions arising
from the above random-partitioning-and-assignment procedure with
forced coupling after $k$-levels of recursive partitioning. That is,
if after $k$ levels of partitioning a node $A$ is reached and
the two measures are not coupled on it, then force them to couple
on $A$ and generate $Q_1^{(k)}(\cdot |A)=Q_2^{(k)}(\cdot|A)$ from
$OPT_{|A}(\R;\rho,\blam^b,\balp^b)$. We do this for all such nodes to get $(Q_1^{(k)}, Q_2^{(k)})$.
\begin{thm}
In the random-partitioning-and-assignment procedure for generating a
pair of measures described
above, if $\gamma(A),\rho(A)>\delta$ for some $\delta>0$ and all potential nodes $A$
defined by the partition rule $\R$, then with probability 1,
$(Q_1^{(k)},Q_2^{(k)})$ converges to a pair of absolutely continuous
(w.r.t. $\mu$) random probability measures $(Q_1,
Q_2)$ in the following sense.
\[
sup_{E \in \B} |Q_1^{(k)}(E)-Q_1(E)| + |Q_2^{(k)}(E)-Q_2(E)|
\rightarrow 0,
\] 
where $\B$ is the collection of Borel sets.
\label{thm:coOPT}
\end{thm}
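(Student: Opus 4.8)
The plan is to mirror the proof of Theorem~1 in \cite{wongandma:2010}, with the role played there by ``optional stopping'' now played by ``coupling'': there the crux is that the \emph{Lebesgue} measure of the not-yet-stopped region tends to $0$, whereas here the crux will be that the $Q_i$-\emph{mass} of the not-yet-coupled region tends to $0$, which is exactly what one needs for total-variation control. First I would realize all of the $(Q_1^{(k)},Q_2^{(k)})$ on a single probability space carrying the coupling variables $C(A)$, the selector variables $J(A)$, the assignment vectors $\bthe^j_i(A)$, and all the internal randomness of the standard OPTs that a coupling spawns; then ``forced coupling after $k$ levels'' simply means that at each depth-$k$ node $A$ not yet coupled one activates the standard $OPT_{|A}(\R;\rho,\blam^{b},\balp^{b})$ built from the \emph{pre-drawn} randomness. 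With this, the sequence of pairs $(Q_1^{(k)},Q_2^{(k)})$ consists of probability measures on a common space, and since the set of probability measures on $(\om,\B)$ is complete under the metric $d(P,P')=\sup_{E\in\B}|P(E)-P'(E)|$ (a closed subset of the Banach space of finite signed measures in total-variation norm), it suffices to show that $(Q_i^{(k)})_k$ is a.s.\ $d$-Cauchy for $i=1,2$ and that the limits are absolutely continuous with respect to $\mu$.

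For $k\ge0$ let $\mathcal{U}_k$ be the (random, finite) collection of depth-$k$ nodes all of whose proper ancestors carried coupling value $0$, let $U^k=\bigcup_{A\in\mathcal{U}_k}A$, and let $\mathcal{H}_{k-1}$ denote the information in the recursion up to (but not including) the coupling decisions at depth $k-1$. Two facts drive the argument. First, the constructions of $(Q_1^{(k)},Q_2^{(k)})$ and $(Q_1^{(k+1)},Q_2^{(k+1)})$ agree down to depth $k$ --- in particular they give each depth-$k$ node the same mass $Q_i(A)$ --- and they agree exactly on every node lying below an already-coupled node (where both equal the relevant standard OPT measure, with no truncation involved); hence they can disagree only in how the mass $Q_i(A)$ is distributed inside each $A\in\mathcal{U}_k$, so that for every Borel $E$ and $i=1,2$,
\[
\bigl|Q_i^{(k)}(E)-Q_i^{(k+1)}(E)\bigr|\ \le\ \sum_{A\in\mathcal{U}_k}Q_i(A)\ =\ Q_i^{(k)}(U^k).
\]
Second, each $B\in\mathcal{U}_{k-1}$ couples at the next level with probability $\gamma(B)>\delta$ (its descendants then leaving $\mathcal{U}_k$ altogether), and otherwise its children partition $B$ and carry total $Q_i$-mass exactly $Q_i(B)$ into $U^k$, whence
\[
E\bigl[\,Q_i^{(k)}(U^k)\ \big|\ \mathcal{H}_{k-1}\,\bigr]\ =\ \sum_{B\in\mathcal{U}_{k-1}}\bigl(1-\gamma(B)\bigr)Q_i(B)\ \le\ (1-\delta)\,Q_i^{(k-1)}(U^{k-1}).
\]
Iterating from $U^0=\om$, $Q_i^{(0)}(\om)=1$, gives $E[Q_i^{(k)}(U^k)]\le(1-\delta)^k$, so $\sum_k Q_i^{(k)}(U^k)<\infty$ almost surely; together with the first display this shows $(Q_i^{(k)})_k$ is a.s.\ $d$-Cauchy, hence $d$-convergent to a probability measure $Q_i$ with $d(Q_i^{(k)},Q_i)\le\sum_{j\ge k}Q_i^{(j)}(U^j)\to0$, and summing over $i=1,2$ yields precisely the convergence claimed in the theorem.

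It remains to check $Q_i\ll\mu$. Because $U^k$ is a union of depth-$k$ nodes whose $Q_i$-masses are frozen once level $k$ is reached, $Q_i(U^k)=Q_i^{(k)}(U^k)\to0$, so the ``never-coupled'' set $\bigcap_k U^k$ is $Q_i$-null; off it, each point lies in a unique maximal node $A$ with $C(A)=1$, on which $Q_i(\cdot\,|A)$ is the measure generated by $OPT_{|A}(\R;\rho,\blam^{b},\balp^{b})$, which is absolutely continuous with respect to $\mu|_A$ by Theorem~1 of \cite{wongandma:2010}; decomposing the complement of $\bigcap_k U^k$ into these countably many maximal coupled nodes gives $Q_i\ll\mu$. (Atoms, $M(A)=0$, are treated as automatically coupled with an absolutely continuous common conditional, exactly as in the one-sample construction; they leave every $U^k$ and do not affect the conclusion.) The main obstacle I anticipate is making the first paragraph airtight --- constructing the $Q_i^{(k)}$ on a single probability space so that ``$Q_i^{(k)}$ and $Q_i^{(k+1)}$ agree except inside $U^k$'' holds literally rather than merely in distribution, and cleanly separating the coupling variables, the per-node OPTs, and the forced coupling; once that is in place the two displayed estimates are short, and the absolute-continuity step is essentially a citation of the one-sample result.
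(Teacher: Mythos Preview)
Your argument is correct: the geometric decay $E[Q_i^{(k)}(U^k)]\le(1-\delta)^k$ gives a.s.\ summability, hence a total-variation Cauchy sequence, and the decomposition into maximal coupled nodes (each carrying an a.s.\ absolutely continuous OPT) yields $Q_i\ll\mu$. The paper, however, takes a shorter reductive route rather than re-running the Theorem~1 estimates. Its key observation is that the \emph{marginal} procedure generating $Q_i$ is literally a (generalized) OPT: pre-draw an independent $Q_0^A\sim OPT_{|A}(\R;\rho,\blam^b,\balp^b)$ for every potential node $A$, then run the one-sample OPT construction with $C(A)$ playing the role of the stopping variable $S(A)$, $J(A)$ and $\bthe_i^{J}(A)$ as selector and assignment, and $Q_0^A$ as the local base measure. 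Since the potential nodes are countable, with probability~1 every $Q_0^A$ is absolutely continuous, and then the \emph{same} argument that proves Theorem~1 in \cite{wongandma:2010} (with $u(\cdot|A)$ replaced by $Q_0^A$) gives the total-variation convergence and absolute continuity of $Q_i$ in one stroke. What your approach buys is explicitness and a cleaner handling of the single-probability-space coupling you flag as the ``main obstacle''; what the paper's approach buys is economy and the conceptual point that the co-OPT marginal \emph{is} an OPT with random absolutely continuous base, so no new estimate is needed.
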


\begin{defn}
This pair of random probability measures ($Q_1,Q_2$) is
said to have a coupling optional P\'olya tree (co-OPT) distribution with
partition rule $\R$,
coupling parameters $\blam$, $\balp_1$, $\balp_2$, $\gamma$, and base parameters $\blam^b$, $\balp^b$, $\rho$, and can be written as
{\it co-OPT}$(\R; \blam, \balp_1, \balp_2, \gamma; \blam^b, \balp^b, \rho)$.
\end{defn}

\begin{proof}[Proof of Theorem~\ref{thm:coOPT}]
See supplementary materials.
 \end{proof}
Similar to the OPT prior, the co-OPT distribution has
large support under the $L_1$ metric. This is formulated in
the following theorem.
\begin{thm}
Let $\om$ be a bounded rectangle in $\real^p$. Suppose that the
condition of Theorem \ref{thm:coOPT} holds along with the following
two conditions:
\begin{itemize}
\item[(1)] For any $\epsilon>0$, there exists a partition of the sample space allowed under $\R$,
$\om = \cup_{i=1}^I A_i$,  such that the diameter of each node $A_i$ is less then $\epsilon$.
\item[(2)] The
coupling probabilities $\gamma(A)$, stopping probabilities $\rho(A)$, coupling selector
probabilities $\lambda^j(A)$, base selection probabilities
$\lambda^{b}_{j}(A)$, as well as the assignment probabilities  
$\alpha_{1i}^j(A)/(\sum_l\alpha_{1l}^j(A))$,
$\alpha_{2i}^j(A)/(\sum_l\alpha_{2l}^j(A))$, and
$\alpha_{i}^{bj}(A)/(\sum_l\alpha_{l}^{bj}(A))$ for all $i$, $j$ and all potential elementary
regions are uniformly bounded
away from $0$ and $1$.
\end{itemize}
Let
$q_1=dQ_1/d\mu$ and $q_2=dQ_2/d\mu$, then for any two density functions
$f_1$ and $f_2$, and any $\tau>0$, we have
\begin{equation*}
P\left(\int|q_1(x)-f_1(x)|d\mu<\tau \text{ and }
  \int|q_2(x)-f_2(x)|d\mu<\tau \right)>0.
\end{equation*}
\label{thm:coOPT_large_support}
\end{thm}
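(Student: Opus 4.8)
The plan is to follow the route used for the large support of ordinary optional P\'olya trees: approximate $f_1$ and $f_2$ by step densities adapted to a partition that $\R$ can actually realize, and then show that the co-OPT construction puts positive probability on producing that exact partition together with the correct cell masses and exactly-uniform conditional distributions inside each cell. Fix $\tau>0$. The first step is a reduction to step densities: by a routine $L^1$ argument (approximate $f_\ell$ by a continuous function, use uniform continuity on the compact rectangle $\om$, and the fact that cellwise averaging is an $L^1$ contraction) there is $\epsilon>0$ so that for \emph{any} partition $\om=\cup_{i=1}^I A_i$ into cells of diameter $<\epsilon$, the averaged densities $\bar f_\ell:=\sum_i (p_{\ell,i}/\mu(A_i))\,\I_{A_i}$, with $p_{\ell,i}=\int_{A_i}f_\ell\,d\mu$, satisfy $\int|f_\ell-\bar f_\ell|\,d\mu<\tau/2$ for $\ell=1,2$. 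By condition (1) I may take such a partition $\{A_i\}$ to be one allowed under $\R$; since $\R$ is finite, this partition is the leaf set of a finite tree $\mathcal{T}$ of splits whose internal nodes form a finite collection of potential nodes $A$. It then suffices to show that the co-OPT puts positive probability on the event $\{\int|q_\ell-\bar f_\ell|\,d\mu<\tau/2 \text{ for }\ell=1,2\}$.

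Next I would exhibit an explicit positive-probability event on which this holds, as the intersection of three layers, each controlled by conditionally independent ingredients of the construction. Layer (i): at every internal node $A$ of $\mathcal{T}$ the coupling variable is $C(A)=0$ and the partition selector $J(A)$ picks exactly the split prescribed by $\mathcal{T}$ --- finitely many choices, each of probability bounded below by condition (2) together with the hypothesis that $\gamma$ is bounded away from $1$. Layer (ii): given (i), the independent assignment vectors $\bthe^j_1(A),\bthe^j_2(A)$ at the internal nodes lie in a small neighbourhood of target values for which the induced masses obey $\sum_i|Q_\ell(A_i)-p_{\ell,i}|<\tau/2$ for $\ell=1,2$; the target values are obtained by setting the mass of each node of $\mathcal{T}$ equal to its $f_\ell$-mass and taking the corresponding child/parent ratios, with cells (and branches) of zero $f_\ell$-mass handled by pushing the relevant coordinate toward the simplex boundary, below which arbitrary assignments keep all descendant masses near zero; positivity of the probability then follows because the relevant Dirichlet distributions have parameters with normalized ratios bounded away from $0$ and $1$ (condition (2)) and hence full support. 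Layer (iii): given (i)--(ii), at every leaf $A_i$ the coupling variable is $C(A_i)=1$ and the conditioning $OPT_{|A_i}(\R;\rho,\blam^{b},\balp^{b})$ stops immediately (or $A_i$ is an atom), so $Q_1(\cdot|A_i)=Q_2(\cdot|A_i)=u(\cdot|A_i)$ \emph{exactly} --- probability at least $\delta$ (coupling) times $\delta$ (stopping), independently over the finitely many leaves. Being a finite intersection of conditionally independent positive-probability events, the whole event has positive probability.

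On this event, $q_\ell$ equals $Q_\ell(A_i)\,u(\cdot|A_i)$ and $\bar f_\ell$ equals $p_{\ell,i}\,u(\cdot|A_i)$ on each $A_i$, so $\int|q_\ell-\bar f_\ell|\,d\mu=\sum_i|Q_\ell(A_i)-p_{\ell,i}|<\tau/2$; combined with the first step and the triangle inequality this yields $\int|q_\ell-f_\ell|\,d\mu<\tau$ for $\ell=1,2$ simultaneously, on an event of positive probability. The main obstacle is layer (ii): one must choose the finitely many target assignment vectors along $\mathcal{T}$ so that the telescoping products of Dirichlet coordinates reproduce the prescribed cell masses, treat cells of zero $f_\ell$-mass by moving toward the simplex boundary, and then combine full support of the Dirichlet with the finiteness of $\mathcal{T}$ to conclude positive probability. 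Once this is in place, layers (i) and (iii) are immediate from conditions (1)--(2) and the $\rho,\gamma>\delta$ hypothesis, and the $L^1$ bookkeeping in the last step is routine precisely because coupling-and-stopping at the leaves makes the conditionals exactly uniform rather than merely close.
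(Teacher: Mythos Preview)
Your argument is correct and follows the same overall architecture as the paper's proof: reduce to step approximants on a fine $\R$-admissible partition, then exhibit a positive-probability event on which (a) the coupling tree realizes exactly that partition, (b) the Dirichlet assignments put the leaf masses close to the targets, and (c) the conditional distributions on the leaves are (close to) uniform.

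The one genuine difference is in your Layer~(iii). The paper does \emph{not} force the base OPT to stop at the leaves; instead it lets the random local base densities $q^{A_i}$ be arbitrary and invokes the large-support theorem for ordinary OPTs (Theorem~2 of \cite{wongandma:2010}) to get, with positive probability, $\int_{A_i}|q^{A_i}-1/\mu(A_i)|\,d\mu<\tau/(3\cdot 2^i)$ simultaneously for all $i$. This leads to a three-term decomposition with $\tau/3$ allotted to each piece. Your approach is more elementary: by additionally requiring $S(A_i)=1$ in the base OPT at every leaf (probability at least $\rho(A_i)>\delta$ each, independently), you make $q^{A_i}$ \emph{exactly} uniform and collapse the argument to a two-term $\tau/2$ split, avoiding both the external citation and the geometric summation. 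The paper's route, on the other hand, shows slightly more: the desired $L_1$-neighborhood already has positive probability conditional on the coupling partition and leaf masses alone, without constraining the base variables beyond their first level. Either route is valid here; yours is shorter.
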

\vspace{-3em}

\begin{proof}
See supplementary materials.
\end{proof}

\subsection{Bayesian inference on the two sample problem using co-OPT prior}
We next show that the co-OPT prior is conjugate and introduce the
recipe for making inference on the two sample problem using this
prior. Suppose $(Q_1,
Q_2)$ is distributed as {\it co-OPT}$(\R; \blam, \balp_1,\balp_2,\gamma; \blam^b, \balp^b, \rho)$, and we observe two i.i.d.\! samples
$\bx_1=(x_{11},x_{12}, \ldots, x_{1n_1})$ and 
$\bx_2=(x_{21},x_{22}, \ldots, x_{2n_2})$ from $Q_1$ and
$Q_2$ respectively. For a node $A$
reached during the random partitioning steps in the
generative procedure of $(Q_1,Q_2)$, let
$\bx_1(A)=\{x_{11},x_{12}, \ldots, x_{1n_1}\} \cap A$ and $\bx_2(A)=\{ x_{21},x_{22}, \ldots, x_{2n_2}
\}\cap A$ be the
observations from the two samples in $A$, and let
$n_1(A)=|\bx_1(A)|$ and $n_2(A)=|\bx_2(A)|$ be the sample sizes in
$A$. As before, we let $q_1$ and $q_2$ denote the densities of the two
distributions and let $q^{A}_0$ denote the density of the
random local base measure $Q^A_0$.

The likelihood of $\bx_1(A)$ on $A$ under $q_1(\cdot |A)$ and that for
$\bx_2(A)$ under $q_2(\cdot |A)$ are
\begin{align}
\left\{ \begin{array}{ll} q_1(\bx_1|A)=C q^{A}_0(\bx_1) +
    (1-C)\prod_{i=1}^{K^{J}}(\theta^{J}_{1i})^{n_1(A^J_i)}q_1(\bx_1|A^J_i)\\\\
    q_2(\bx_2|A)=C q^{A}_0(\bx_2) +
    (1-C)\prod_{i=1}^{K^{J}}(\theta^{J}_{2i})^{n_2(A^J_i)}q_2(\bx_2|A^J_i)\end{array}
\right.
\label{eq:coOPT_mlik}
\end{align}
where we have again suppressed the ``(A)'' notation for $C(A)$, $J(A)$,
$K(A)^{J(A)}$, $\theta^{J(A)}_{1i}(A)$, $\theta^{J(A)}_{2i}(A)$,
$\bx_1(A)$ and $\bx_2(A)$. The joint likelihood of observing
$\bx_1(A)$ and $\bx_2(A)$ conditional on $A$ is
\begin{align}
q_1(\bx_1|A)q_2(\bx_2|A)=C q^{A}_0(\bx_1,\bx_2) +
(1-C)\prod_{i=1}^{K^{J}}(\theta^{J}_{1i})^{n_1(A^J_i)}(\theta^{J}_{2i})^{n_2(A^J_i)} q_1(\bx_1|A^J_i)
q_2(\bx_2 |A^J_i),
\label{eq:coOPT_lik}
\end{align}
\vspace{-2em}

\noindent where $q^{A}_0(\bx_1, \bx_2)=q^{A}_0(\bx_1)q^{A}_0(\bx_2)$ is the
standard OPT likelihood for the combined sample $\bx(A)=(\bx_1(A),
\bx_2(A))$ on $A$ given by \eqref{eq1}.
Integrating out $q^{A}_0$, $C$, $J$, $\bthe^{J}_1$ and $\bthe^{J}_2$ from
\eqref{eq:coOPT_lik}, we get the conditional marginal likelihood
\begin{align}
P(\bx_1, \bx_2 |A) 
&= \gamma P_0(\bx_1, \bx_2 |A) + (1-\gamma)
\sum_{j=1}^{M}\lambda_j \frac{D(\bn_1^j+\balp_1^j)D(\bn_2^j+\balp_2^j)}{D(\balp_1^j)D(\balp_2^j)}\prod_{i=1}^{K^j}P(\bx_1,\bx_2|A^j_i),  
\label{eq:coOPT}
\end{align}
\vspace{-2em}

\noindent where $\bn_h^j=(n_h(A^j_1), n_h(A^j_2), \ldots, n_h(A^j_{K^j})$ and
$\balp_h^j=(\alpha^j_{h1}(A),\alpha^j_{h2}(A),\ldots,
\alpha^j_{hK^j}(A))$ for $h=1,2$, and $P_0(\bx_1,
\bx_2|A)$ is the conditional marginal likelihood of the combined
sample under a standard OPT as given by
\eqref{eq2}. Equation~\eqref{eq:coOPT} provides a recursive recipe
for computing the marginal likelihood term $P(\bx_1, \bx_2|A)$ for
each potential tree node $A$. (Of course, for this recipe to be of
use, one must also specify the terminal conditions for the
recursion. We will discuss ways to specify such conditions in the next
subsection.)

From \eqref{eq:coOPT} one can tell that the posterior distribution of $(Q_1, Q_2)$ is still a co-OPT distribution through the following reasoning. The first term on the RHS of $\eqref{eq:coOPT}$, $\gamma P_0(\bx_1, \bx_2 |A)$, is the probability (conditional on $A$ being a node reached in the partitioning) of the event
\vspace{-2em}

\begin{gather*}
\text{\{$Q_1$ and $Q_2$ get coupled on $A$, observe $\bx_1(A)$ and $\bx_2(A)$\}.}
\end{gather*}
\vspace{-2em}

The second term, $(1-\gamma)\sum_{j=1}^{M}\lambda_j \frac{D(\bn_1^j+\balp_1^j)D(\bn_2^j+\balp_2^j)}{D(\balp_1^j)D(\balp_2^j)}\prod_{i=1}^{K^j}P(\bx_1,\bx_2|A)$, is the probability of
\vspace{-5em}

\begin{gather*}
\text{\{$Q_1$ and $Q_2$ are not coupled on $A$, observe $\bx_1(A)$ and $\bx_2(A)$\}.}
\end{gather*}
Each summand, $(1-\gamma)\lambda_j
\frac{D(\bn_1^j+\balp_1^j)D(\bn_2^j+\balp_2^j)}{D(\balp_1^j)D(\balp_2^j)}\prod_{i=1}^{K^j}P(\bx_1,\bx_2|A)$,
is the probability of
\vspace{-2em}

\begin{gather*}
\text{\{$Q_1$ and $Q_2$ not coupled on $A$, divide $A$ in the $j$th way, observe $\bx_1(A)$ and $\bx_2(A)$\}.}
\end{gather*}
Finally, given that $C(A)=0$ and $J(A)=j$, the posterior distribution
for $\bthe^j_1$ and $\bthe^j_2$ are Dirichlet$(\bn_1^j+\balp_1^j)$ and
Dirichlet$(\bn_2^j+\balp_2^j)$, respectively. This reasoning, together
with Theorem~3 in \cite{wongandma:2010}, shows that the co-OPT prior
is conjugate, and simple applications of Bayes' Theorem provide the
formulae of the parameter values for the posterior. The results are
summarized in the next theorem.
\begin{thm}
Suppose $\bx_1=(x_{11},x_{12}, \ldots, x_{1n_1})$ and
$\bx_2=(x_{21},x_{22}, \ldots, x_{1n_2})$ are two independent i.i.d{.}
samples from $Q_1$ and $Q_2$. Let $(Q_1, Q_2)$ have a {\it co-OPT}$(\R;
\blam, \balp_1,\balp_2,\gamma; \blam^b, \balp^b, \rho)$ prior that
satisfies the conditions in Theorem~\ref{thm:coOPT}.  Then the
posterior distribution of $(Q_1, Q_2)$ is still a coupling optional
P\'olya tree with the following parameters. 
\begin{enumerate}
\item Coupling probabilities: 
$\quad \gamma(A|\bx_1, \bx_2) = \gamma(A)P_0(\bx_1, \bx_2|A)/P(\bx_1, \bx_2|A).$
\item Partition selection probabilities:
\[ \lambda_j(A |\bx_1, \bx_2) \propto  \lambda_j(A) \frac{D(\bn_1^j+\balp_1^j)D(\bn_2^j+\balp_2^j)}{D(\balp_1^j)D(\balp_2^j)}\prod_{i=1}^{K^j}P(\bx_1,\bx_2|A), \quad j=1,2,\ldots, M(A).\]
\item Probability assignment pseudo-counts:
\[ \alpha^j_{1i}(A |\bx_1, \bx_2) = n_1(A^j_i) + \alpha^j_{1i}(A) \quad \text{and} \quad \alpha^j_{2i}(A |\bx_1, \bx_2) = n_2(A^j_i) + \alpha^j_{2i}(A), \] 
for $j=1,2,\ldots, M(A)$ and $i=1,2,\ldots, K^j(A)$.
\item Base stopping probabilities:
$\quad \rho(A|\bx_1, \bx_2) = \rho(A) u(\bx_1, \bx_2|A)/P_0(\bx_1, \bx_2|A).$
\item Base selection probabilities:
\[ \lambda^b_j(A |\bx_1, \bx_2) \propto  \lambda^b_j(A) \frac{D(\bn_1^j+\bn_2^j+\balp^{bj})}{D(\balp^{bj})}\prod_{i=1}^{K^j}P_0(\bx_1,\bx_2|A), \quad j=1,2,\ldots, M(A).\]
\item Base assignment pseudo-counts:
$\quad \alpha^{bj}_{i}(A |\bx_1, \bx_2) = n_1(A^j_i) + n_2(A^j_i) + \alpha^{bj}_{i}(A),$\\
for $j=1,2,\ldots, M(A)$ and $i=1,2,\ldots, K^j(A)$.
\end{enumerate}
\end{thm}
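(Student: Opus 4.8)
The plan is to establish conjugacy one tree node at a time: the self-similarity of the co-OPT construction turns the statement into an induction over the levels of recursive partitioning, and the ``coupled'' case reduces to the known conjugacy of the ordinary OPT (Theorem~3 of \cite{wongandma:2010}). First I would record the self-similarity property: once a node $A$ is reached during the random partitioning, the continuation of the generative procedure inside $A$ --- drawing $C(A)$, and then either a common $OPT_{|A}(\R;\rho,\blam^b,\balp^b)$ when $C(A)=1$, or the selector $J(A)$ together with the independent assignment vectors $\bthe^j_1(A),\bthe^j_2(A)$ and recursion into the children when $C(A)=0$ --- is itself a co-OPT$_{|A}$ governed by the restrictions to $A$ of $\R,\blam,\balp_1,\balp_2,\gamma,\blam^b,\balp^b,\rho$. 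This parallels the self-similarity of the ordinary OPT.

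Next I would perform the single-node Bayes update at $A=\om$. The conditional likelihood of the two samples given the local variables is \eqref{eq:coOPT_lik}, and integrating those variables out gives \eqref{eq:coOPT}; as the discussion following \eqref{eq:coOPT} makes explicit, $\gamma P_0(\bx_1,\bx_2|A)$ is the prior probability of the event ``$Q_1,Q_2$ are coupled on $A$ and $\bx_1(A),\bx_2(A)$ are observed'', while each summand $(1-\gamma)\lambda_j\,\tfrac{D(\bn_1^j+\balp_1^j)D(\bn_2^j+\balp_2^j)}{D(\balp_1^j)D(\balp_2^j)}\prod_i P(\bx_1,\bx_2|A^j_i)$ is the prior probability of ``$Q_1,Q_2$ are uncoupled on $A$, $A$ is split in the $j$th way, and $\bx_1(A),\bx_2(A)$ are observed''. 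Dividing these by $P(\bx_1,\bx_2|A)$ and renormalizing yields items~1 and~2. On the event $\{C(A)=0,\,J(A)=j\}$ the vectors $\bthe^j_1(A)$ and $\bthe^j_2(A)$ enter the likelihood only through the multinomial factors $\prod_i(\theta^j_{1i})^{n_1(A^j_i)}$ and $\prod_i(\theta^j_{2i})^{n_2(A^j_i)}$, so Dirichlet--multinomial conjugacy gives item~3, with the two posteriors independent of each other. On the event $\{C(A)=1\}$ we have $Q_1(\cdot|A)=Q_2(\cdot|A)\sim OPT_{|A}(\R;\rho,\blam^b,\balp^b)$ and the joint likelihood collapses to the ordinary OPT likelihood $q^A_0(\bx_1)\,q^A_0(\bx_2)$ of the pooled sample $\bx(A)=(\bx_1(A),\bx_2(A))$; applying Theorem~3 of \cite{wongandma:2010} to that pooled sample, with count vector $\bn_1^j+\bn_2^j$ in place of the single-sample counts, yields items~4, 5 and~6 directly.

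Finally I would propagate the single-node update through the whole tree. On $\{C(A)=0,\,J(A)=j\}$, conditionally on the counts passed to the children, the continuations inside $A^j_1,\ldots,A^j_{K^j}$ are independent co-OPT$_{|A^j_i}$ draws and the observations inside them are independent, so the same single-node argument applies to each child; induction on the recursion level then gives the stated posterior parameters at every node that is reached, and since the formulae in items~1--6 are built only from $\R$, the prior parameters, the marginal-likelihood maps $P(\cdot|A)$, $P_0(\cdot|A)$, $u(\cdot|A)$ and the counts $n_1(A^j_i),n_2(A^j_i)$ --- all well-defined for \emph{every} potential node of $\R$ --- they specify the posterior co-OPT in full. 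I expect the one genuine obstacle to be the passage from this finite recursion to the full, possibly infinite, tree: one must verify that conditioning on $(\bx_1,\bx_2)$ commutes with the recursive decomposition and that the marginal-likelihood series converge, so that the posterior is a bona fide co-OPT rather than a formal object. This is precisely where the hypotheses $\gamma(A),\rho(A)>\delta$ and the conclusion of Theorem~\ref{thm:coOPT} enter: they ensure that the forced-coupling approximations at level $k$ converge a.s.\ to $(Q_1,Q_2)$, so that the posterior, obtained as the corresponding limit, is again a well-defined co-OPT distribution. The remaining work is the routine Bayes-rule and Dirichlet--multinomial algebra.
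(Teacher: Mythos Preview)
Your proposal is correct and follows essentially the same route as the paper: the paper's argument (given in the paragraph immediately preceding the theorem rather than as a separate proof) interprets the two terms of \eqref{eq:coOPT} as joint probabilities of the events ``coupled and observe $\bx$'' and ``uncoupled, split the $j$th way, and observe $\bx$'', applies Bayes' rule to obtain items~1--2, invokes Dirichlet--multinomial conjugacy for item~3, and defers items~4--6 to Theorem~3 of \cite{wongandma:2010} applied to the pooled sample --- exactly your steps. Your explicit induction over levels and your remark on passing to the limit via Theorem~\ref{thm:coOPT} are more careful than the paper's treatment, which leaves those points implicit.
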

\noindent Two remarks: (1) All of the posterior parameter values can be
computed {\em exactly} using the above formulae, without the need of any
Monte Carlo procedure. (2) The posterior coupling parameters contain
information about the difference
between the two underlying distributions $Q_1$ and $Q_2$, while the
posterior base parameters contain information regarding the
underlying structure of the two measures. This naturally
suggests that if one is only interested in two sample comparison, one should
only need the posterior distribution of the coupling variables, and not
those of the base variables. This will become clear in Sections~5 and 6
where we give several numerical examples.

\subsection{Terminal conditions}
As mentioned earlier, we need to specify the terminal conditions for
the recursion used to compute $P(\bx_1, \bx_2 |A)$. 
Depending on the nature of $\om$ and the prior specification, the
recursion formula \eqref{eq:coOPT} can terminate in several ways as
demonstrated in the following two examples.

\begin{exam}[$2^p$ contingency table]
Let $\om=\{1,2\}\times\{1,2\}\times\dots\times\{1,2\}$. For
any rectangle $A$ in the table---a set of the form $A_1 \times A_2
\times \dots A_p$ with $A_1$, 
$A_2$, \ldots, $A_p$ being non-empty subsets of $\{1,2\}$---let
$k_1, k_2, \ldots, k_{M(A)}$ be the ``intact'' dimensions of $A$, that
is $A_{k_j}=\{1,2\}$ for $j=1,2,\ldots, M(A)$. Let $\R$ be the diadic
splitting rule that allows $A$ to be divided into two halves on each intact
dimension $j$. In our earlier notation,
$ \R(A)=\left\{\{A^j_1, A^j_2\}_{j=1}^{M(A)}\right\},$
where $A^j_1 = A_1 \times A_2 \times \dots \times A_{k_j-1} \times
\{1\} \times A_{k_j+1} \times \dots \times 
A_p$ and $A^j_2 = A_1 \times A_2 \times \dots \times A_{k_j-1} \times
\{2\} \times A_{k_j+1} \times \dots \times 
A_p$. Suppose two i.i.d.\! samples  $\bx_1$ and $\bx_2$ are observed.
Assume that $(Q_1, Q_2)$ has
a co-OPT prior with the following prior parameter values for each
rectangle $A$:  
$\lambda_{j}(A)=\lambda^b_j(A)=\tfrac1{M(A)}$,
$\alpha_{i}^j(A)=\alpha_{i}^{bj}(A)\equiv \tfrac12$ for $i=1,2$ and
$j=1,2,\ldots, M(A)$, and finally  $\gamma(A)
\equiv \gamma_0$, $\rho(A)\equiv \rho_0$, where $\gamma_0$ and
$\rho_0$ are constants in $(0,1)$.  

In this example, there are three types of terminal nodes for
$P_0(\bx_1, \bx_2 |A)$ and they are given in
Example~3 of \cite{wongandma:2010}.
By a similar reasoning, there are also three types of terminal nodes
for $P(\bx_1,\bx_2|A)$.
\begin{enumerate}
\item If $A$ contains no data point from either sample,
  $P(\bx_1, \bx_2|A)=1$.
\item If $A$ is a single table cell containing any number
  observations, $P(\bx_1, \bx_2 |A)=1$. 
\item $A$ contains a single observation (from either sample). In this
  case, $P(\bx_1, \bx_2 |A)=2^{-M(A)}=1/\mu(A)$. To see this, first we
  let $t_{M(A)} = P(\bx_1,\bx_2|A).$
  By Example~3 in \cite{wongandma:2010}, we have
  $P_0(\bx_1,\bx_2|A) = 2^{-M(A)}.$
  Hence we have
\begin{align*}
t_{M(A)}&=\gamma_0 2^{-M(A)}+(1-\gamma_0)
\left(\dfrac1{M(A)}\sum_{j=1}^M\frac{B\left(\tfrac32,\tfrac12\right)}{B\left(\tfrac12,\tfrac12\right)}\right)\cdot t_{M(A)-1}\\
   &=\gamma_0 2^{-M(A)}+(1-\gamma_0)\tfrac12 t_{M(A)-1}\\
   &=\gamma_0 2^{-M(A)}\,\frac{\left(1-(1-\gamma_0)^{M(A)}\right)}{1-(1-\gamma_0)}+\left(\frac{1-\gamma_0}{2}\right)^{M(A)}\\ 
   &=2^{-M(A)}=1/\mu(A).
\end{align*}
\end{enumerate}
\label{ex:2toK}
\end{exam}
\vspace{-2em}

\begin{exam}[Rectangle in $\real^p$]
Let $\om=I_1 \times I_2 \times \ldots \times I_p$ be a bounded
rectangle in $\real^p$. Let $\R$ be the 
diadic partition rule such that for any rectangle $A$ of the form
$A_1 \times A_2
\times \dots A_p$ with $A_1$, 
$A_2$, \ldots, $A_p$ being non-empty subintervals of
$I_1,I_2,\ldots,I_p$ respectively, $A$ can be divided in half in any of the
$p$ dimensions. Again, let $\bx_1$ and $\bx_2$ be the two samples, and
let $(Q_1,Q_2)$ have a co-OPT prior with the following parameters:
$\lambda_{j}(A)=\lambda^b_j(A)\equiv\tfrac1{p}$,
$\alpha_{i}^j(A)=\alpha_i^{bj}(A) \equiv \tfrac12$, $\gamma(A)\equiv \gamma_0$
and $\rho(A)\equiv \rho_0$, for all $A$, $i=1,2$, and $j=1,2,\ldots, M(A)$.

In this case there are two types of terminal nodes
for $P(\bx_1, \bx_2 |A)$.
\begin{enumerate}
 \item $A$ contains no observations. In this case,
  $P(\bx_1, \bx_2|A)=1$.
\item $A$ contains a single observation (from either sample). Then
  $P(\bx_1,\bx_2|A)=1/\mu(A)$. We skip the derivation
  of this as it is similar to that used for Case~3 in Example~\ref{ex:2toK}.
\end{enumerate}
Note that in this example we have implicitly assumed that no
observations, from either sample,
can be identical. With the assumption that $Q_1$ and $Q_2$ are absolutely
  continuous w.r.t{.} the Lebesgue measure, the probability for any observations to be identical
  is 0. However this situation can occur in real data due to rounding. This
  possibility can be dealt with in our following discussion on
  technical termination of the recursion.
\label{ex:RtoK}
\end{exam}
Other than the ``theoretical'' terminal nodes given in the previous
two examples, in real applications it is often desirable to set a
technical lower limit on the size of the nodes to be computed in
order to save computation.  For instance, in the
$\real^p$ example, one can impose that all nodes smaller than 1/1000 of the space $\om$ be
stopped and coupled. That is to let $\gamma(A) = \rho(A) =  1$ by design for all small enough
$A$. The appropriate cutoff threshold of the node size depends on
the nature of the data, but typically there is a wide range of values
that work well. For most problems such a
technical constraint should hardly have any impact on the posterior
parameter values for large nodes. It is worth emphasizing
that for real-valued data, which are almost always discretized (due to rounding), such a constraint actually
becomes useful also in preventing numerical anomalies. In such cases, a general
rule of thumb is that one should always adopt a cutoff size larger than the
rounding unit relative to the length of the corresponding boundary of
the space.  

\section{Numerical examples on two sample comparison}
We next provide three numerical examples,
Examples~\ref{ex:R} through \ref{ex:2toK_ROC}, to demonstrate
inference on the two sample problem using the co-OPT prior. In
these examples, the posterior coupling probability of $\om$ serves as
a statistic for testing whether the two samples have come from the
same distribution, which we will refer to as the co-OPT
statistic. In each example we compare our
method to one or more other existing approaches, and in Example~\ref{ex:2toK_ROC}
 we show how
the posterior values of the coupling variables can be used to
learn the underlying structure of the discrepancy between the two
samples. 

For all these examples, we set the prior parameter values in the
fashion of Examples~\ref{ex:2toK} and \ref{ex:RtoK}, with $\gamma_0=\rho_0=0.5$. In Examples~\ref{ex:R} and \ref{ex:R2}, whenever the underlying
distributions have unbounded support, we simply use the range of
the data points in each dimension to define the rectangle $\om$. (As a referee pointed out, an alternative to using this data dependent support is to transform each unbounded dimension through a measurable map such as a cumulative distribution function. The choice of such maps will influence the underlying inference. Although we do not investigate this relation in the current work, it is certainly interesting and deserve further studies in future works.) Also, in these three examples we use
1/1000 as the size cutoff for ``technical'' terminal nodes as
discussed in the previous section.

\begin{exam}[Two sample problem in $\real$]
We simulate the control and case samples under the following three scenarios.
\begin{enumerate}
\item Locational shift: Sample~1 $\sim$ Beta(4,6) and Sample~2 $\sim$
  0.2 + Beta(4,6) with sample sizes $n_1=n_2=20$.
\item Local structure: Sample~1 $\sim$ Uniform[0,1] and Sample~2
  $\sim$ 0.5 Beta(20,10) + 0.5 Beta(10,20) with $n_1=n_2=30$.
\item Dispersion difference: Sample~1 $\sim$ N(0,1) and Sample~2
  $\sim$ N(0,4) with $n_1=n_2=40$.
\end{enumerate}
We place a co-OPT prior on $(Q_1, Q_2)$ 
as described in Example~\ref{ex:RtoK}. (Because here there is
only one dimension, there is no choice of ways to split.) We compare the ROC curves of four
different statistics for testing the null hypothesis that the two samples
have come from the same distribution---namely the Kolmogorov-Smirnov
(K-S) statistic \cite[pp. 392--394]{chakravartistat}, Cramer-von-Mises (CvM) statistic \cite{anderson:1962},
Cramer-test statistic \cite{baringhuas:2004}, and
our co-OPT statistic. The results are presented in the middle column of \ref{fig:ex_1d}. In addition, we also investigate the power of each statistic at the 5\% level under various sample sizes, ranging from 10 data points per sample to 60 per sample. (See the right column of \ref{fig:ex_1d}.) Our co-OPT statistic behaves worse than the other three tests under
the first scenario when there is a simple locational shift, 
better than the other tests for the second scenario, slightly
worse than the Cramer test but
better than the K-S and CvM tests under the last scenario.
\label{ex:R}
\end{exam}
\begin{figure}[tb]
\begin{center}
    \leavevmode 
    \includegraphics[width=16cm]{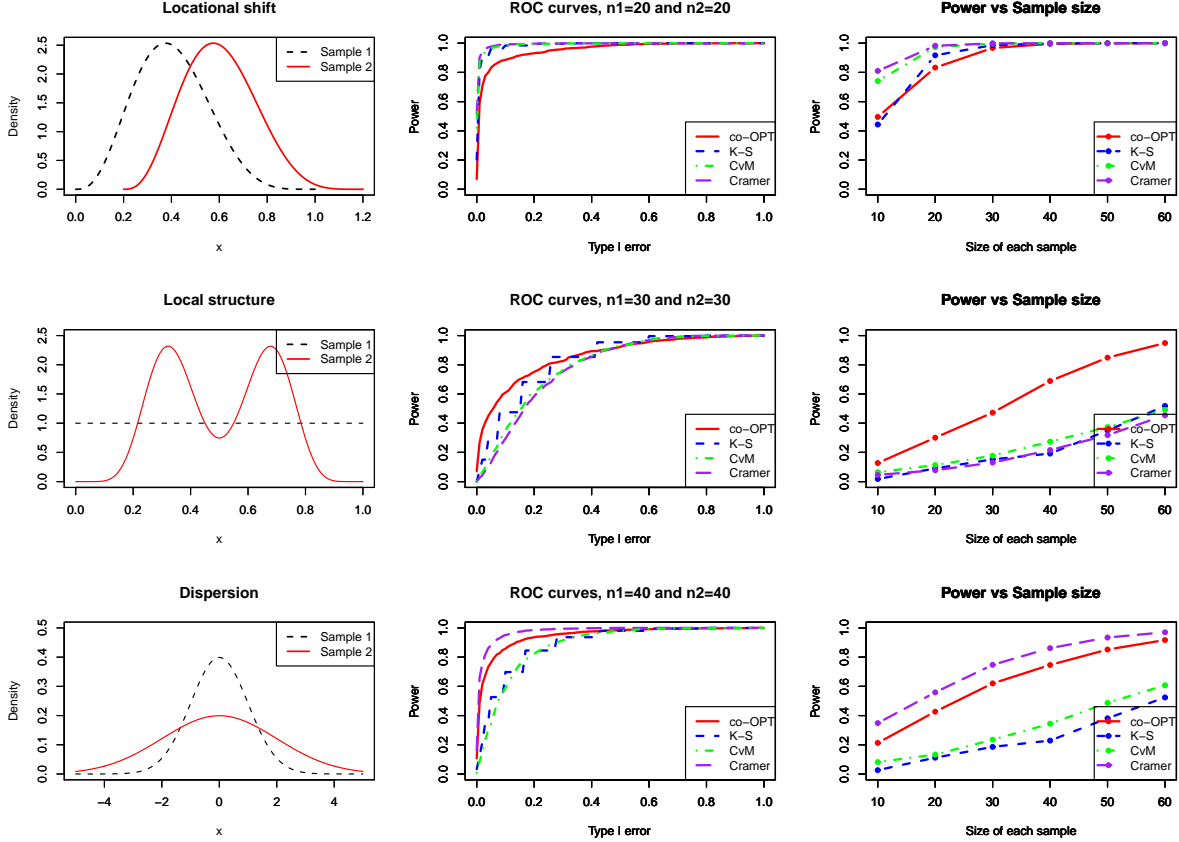}
     \caption{Two simulated samples on $\real$ under three scenarios
       (rows) given in Example~\ref{ex:R}. 
       Left panel: Density functions for the
       two samples. Middle panel: ROC curves for four
       test statistics. Right panel: Power vs. sample size---power (at the 5\% level) is estimated from simulation under equal sample size (horizontal axis) of the case group and the control group. }
  \label{fig:ex_1d}
\end{center}
\end{figure}

\begin{exam}[Two sample problem in $\real^2$]
We simulate two samples under four scenarios.
\begin{enumerate}
\item Locational shift ($n_0=n_1=50$): 

  Sample~1 $\sim BN\Biggl(\begin{pmatrix}
  1 \\ 0 \end{pmatrix},\begin{pmatrix} 2^2 & 0 \\ 0 &
  2^2\\\end{pmatrix}\Biggr)$ and Sample~2 $\sim BN\Biggl(\begin{pmatrix}
  0 \\ 1 \end{pmatrix},\begin{pmatrix} 2^2 & 0 \\ 0 &
  2^2\\\end{pmatrix}\Biggr)$.

\item Subset shift ($n_0=n_1=100$):

  Sample~1 $\sim BN\Biggl(\begin{pmatrix}
  0 \\ 0 \end{pmatrix},\begin{pmatrix} 0.3^2 & 0 \\ 0 &
  0.3^2\\\end{pmatrix}\Biggr)$ and \\
  Sample~2 $\sim 0.8\times BN\Biggl(\begin{pmatrix}
  0 \\ 0 \end{pmatrix},\begin{pmatrix} 0.3^2 & 0 \\ 0 &
  0.3^2\\\end{pmatrix}\Biggr) + 0.2\times BN\Biggl(\begin{pmatrix}
  0.5 \\ 0.5 \end{pmatrix},\begin{pmatrix} 0.3^2 & 0 \\ 0 &
  0.3^2\\\end{pmatrix}\Biggr)$.

\item Dispersion difference ($n_0=n_1=50$): 

  Sample~1 $\sim BN\Biggl(\begin{pmatrix}
  0 \\ 0 \end{pmatrix},\begin{pmatrix} 1 & 0 \\ 0 &
  1\\\end{pmatrix}\Biggr)$ and Sample~2 $\sim BN\Biggl(\begin{pmatrix}
  0 \\ 0 \end{pmatrix},\begin{pmatrix} 0.5^2 & 0 \\ 0 &
  0.5^2\\\end{pmatrix}\Biggr)$.
\item Local structure ($n_0=n_1=50$): 

Sample~1 $\sim   BN\Biggl(\begin{pmatrix}
  0 \\ 0 \end{pmatrix},\begin{pmatrix} 1 & 0.5^2 \\ 0.5^2 &
  1\\\end{pmatrix}\Biggr)$, and

Sample~2 $\sim 0.5\times BN\Biggl(\begin{pmatrix}
  0.5 \\ 0.5 \end{pmatrix},\begin{pmatrix} 0.4^2 & 0 \\ 0 &
  0.4^2\\\end{pmatrix}\Biggr) + 0.5\times BN\Biggl(\begin{pmatrix}
  -0.5 \\ -0.5 \end{pmatrix},\begin{pmatrix} 0.4^2 & 0 \\ 0 &
  0.4^2\\\end{pmatrix}\Biggr)$.
\end{enumerate}
We compare four statistics that measure the similarity between two
distributions 
---(1) the co-OPT statistic, (2) the Cramer test statistic 
 \cite{baringhuas:2004}, (3) the log Bayes factor under P\'olya tree (PT)
 priors in \cite{holmes:2009},
 and (4) the posterior mean of the ``similarity parameter'' $\epsilon$
 given in the dependent Dirichlet Process mixture (DPM) prior proposed in
 \cite{muller2004}. The ROC curves are presented in 
\ref{fig:ex_2d}. Again, the co-OPT performs relatively poorly for a simple
(global) locational shift, but performs resonably well under the other
three scenarios. The PT method does not allow adpative partitioning of the
space, and that appears to have cost a lot of power. On the other hand, the DPM
method performs well under all but the subset shift scenario. Because
the similarity parameter $\epsilon$ captures the proportion of
``commonness'' between two distributions \cite{muller2004}, it does
not capture well differences pertaining to only a small portion of the
probability mass. Details about the prior
specifications for the PT and DPM methods can be found in the
supplementary material. Note that there may be alternative
specifications that will lead to better performance of
these methods for the current example.
\label{ex:R2}
\end{exam}

\begin{figure}[htb]
\begin{center}
    \leavevmode 
    \includegraphics[width=15cm]{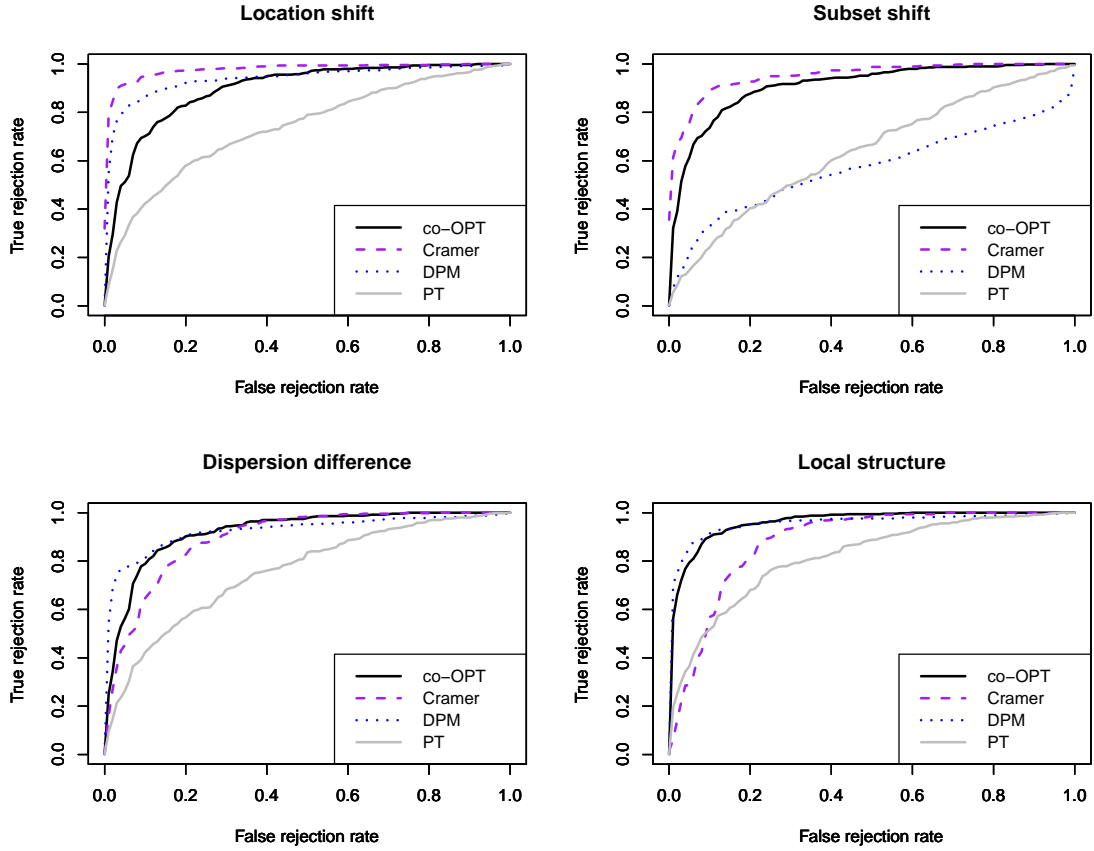}
     \caption{ROC curves for two samples on $\real^2$ under
       the four scenarios given in Example~\ref{ex:R2}.}
  \label{fig:ex_2d}
\end{center}
\end{figure}

Our next example deals with retrospectively sampled data on a
high-dimensional contingency table. In this example, we not only
demonstrate 
the power of our method to test for two sample difference, but also
show that the posterior co-OPT distribution can help learn the underlying
structure of the difference. 

\begin{exam}[Retrospectively sampled data on a $2^{15}$ contingency table]
Suppose there are 15 binary predictors $X_{1}, X_{2}, \ldots, X_{15}$,
and there is a binary response variable $Y$,
e.g. disease status, whose
distribution is
\[
Y \sim  \left\{ \begin{array}{ll}
Bernoulli(0.3) & \mbox{if $X_3 = 1$ and $X_7 = 1$} \\
Bernoulli(0.3) & \mbox{if $X_7 = 0$ and $X_{10} = 0$} \\
Bernoulli(0.1) & \mbox{otherwise.}
\end{array}
\right.
\]
We simulate populations for joint observations of $X_i$'s and $Y$ of
size 200,000 under two scenarios 
\begin{enumerate}
\item $X_1, X_2, \ldots, X_{15}$ $\sim_{i.i.d.}$ Bernoulli(0.5)
\item $X_1, X_2, \ldots X_{8}$ as a Markov Chain with $X_1 \sim$
  Bernoulli(0.5), and $P(X_t = X_{t-1}|X_{t-1})=0.7$, while $X_9, X_{10},
  \ldots X_{15}$ $\sim_{i.i.d}$ Bernoulli(0.5) and are independent of
  $X_1, \ldots, X_8$.
\end{enumerate}
For each scenario, we retrospectively sample controls ($Y$=0) and
cases ($Y$=1). Our interest is in (1) the power of our method in
detecting the difference in the joint distribution of the predictor variables
between the two samples, and (2) whether the method can recover the
``interactive'' structure among the three predictors $X_3$, $X_7$ and $X_{10}$. 

We place two different priors on $(Q_1,Q_2)$ and compare their
performance. The first is our co-OPT distribution with prior
parameters being specified as in Example~\ref{ex:2toK}. The second is a dependent
Dirithlet prior inspired by \cite{muller2004}.
Under this setup we write $Q_1$ and $Q_2$ as mixtures,
$Q_1 = \epsilon H_0 + (1-\epsilon) H_1$ and 
$Q_2 = \epsilon H_0 + (1-\epsilon) H_2$,
where $H_0$ represents the common part of $Q_1$ and $Q_2$ while $H_1$
and $H_2$ the idiosyncratic portion. Under the prior, 
$H_0$, $H_1$ and $H_2$ $\sim_{i.i.d} Dirichlet({\bm \alpha_H})$
and $\epsilon \sim Beta(a_{\epsilon},b_{\epsilon})$. For the hyperparameters,
we chose ${\bm \alpha_H}=(0.5,0.5,\ldots,0.5)$---that is, each cell in
the support of the Dirichlet receives 0.5 prior pseuodocount, and
$a_{\epsilon}=b_{\epsilon}=3$.  We found
that due to the sparsity of the table, restricting the prior to have a
support over only the observed table cells rather than the entire table
drastically improves the power. Therefore, this is what we do
here. (More details
about the prior specification and how MCMC sampling is used to draw
posterior samples for this prior can be found in the supplementary materials.) 

Because $\epsilon$ can be thought of as
a measure of how similar $Q_1$ and $Q_2$ are, the mean of its
posterior distribution can serve as a statistic (which we shall from now on refer to as the
$\epsilon$-statistic) for testing the difference between the
two. The ROC curves of the $\epsilon$-statistic, and that of our co-OPT
statistic, for the two scenarios and different sample sizes are given in the left and middle columns of
\ref{fig:ex1_2toK}. For comparison, the right column of the figure gives the ROC curve for another statistic measuring two sample difference, namely the empirical $L_2$ distance between the two contingency tables corresponding to the cases and the controls. Note that in this example to achieve comparable performance the
the $\epsilon$-statistic and the $L_2$ distance both require samples sizes 10 times as large as
those for the co-OPT! This performance advantage of the co-OPT in this setting
is probably due to (1) the adaptive partitioning feature and (2) the
coupling feature, both of which help mitigate the difficulties caused by the sparsity of the table
counts. Also interesting is the impact of the
correlation among predictors on the power. For the co-OPT, the
correlation structure in Scenario~2 makes it harder to find a good
partition of the space and therefore reduces power. On the other hand,
the performance of the $\epsilon$-statistic, as well as that of the $L_2$ distance, is actually better for
Scenario~2, as the correlation structure turns a marginal association
(marginal w.r.t. the subspace of $X_3$, $X_7$ and $X_{10}$) into a
joint one involving $X_1$ through $X_{10}$.

While the $\epsilon$-statistic and the $L_2$ distance can only serve for
detecting the difference, the posterior co-OPT can also
capture the underlying structure of the difference. We find that with about 500 data points in
each sample for Scenario~1 and about 3500 data points in each sample for
Scenario~2, the underlying structure can be accurately recovered using
the hierarchical {\it maximum a posteriori} (hMAP)
tree topology, which is a top-down stepwise posterior maximum likelihood
tree. (The construction of the hMAP tree as 
well as the motivation to choose it over the MAP tree is
discussed in detail in Section 4.2 of \cite{wongandma:2010}.) As one would expect, the correlation between the
predictor variables makes it much harder to recover the exact interactive
relation. A typical hMAP tree
structure for the simulated populations with these sample sizes is given in
\ref{fig:ex_2toK_hmap}. We note that in general a sample of partition trees from the posterior distribution of
the tree structure can be more informative
than the hMAP tree, especially when the sample sizes are not
large enough. We use the hMAP here as a demonstration for its
ease of visualization.
\begin{figure}[!htbp]
\begin{center}
    \leavevmode 
    \includegraphics[width=13cm]{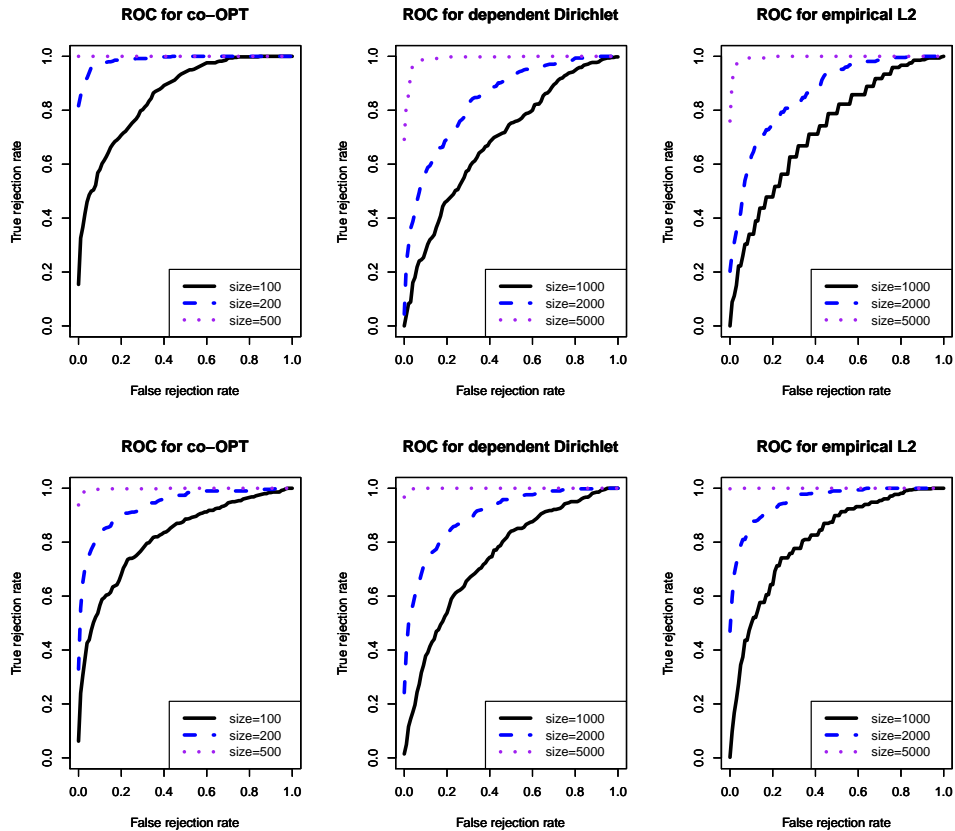}
     \caption{ROC curves of the co-OPT (left), the $\epsilon$ (middle), and the empirical $L_2$ (right) statistics
       for the two scenarios given in Example~\ref{ex:2toK_ROC} (first row for Scenario 1 and second row for Scenario 2). The sample sizes for
  $\epsilon$ and $L_2$ are 10 times as large as those for the co-OPT.}
  \label{fig:ex1_2toK}
\end{center}
\end{figure}

\begin{figure}[!htbp]
\begin{center}
    \leavevmode 
    \includegraphics[width=11cm]{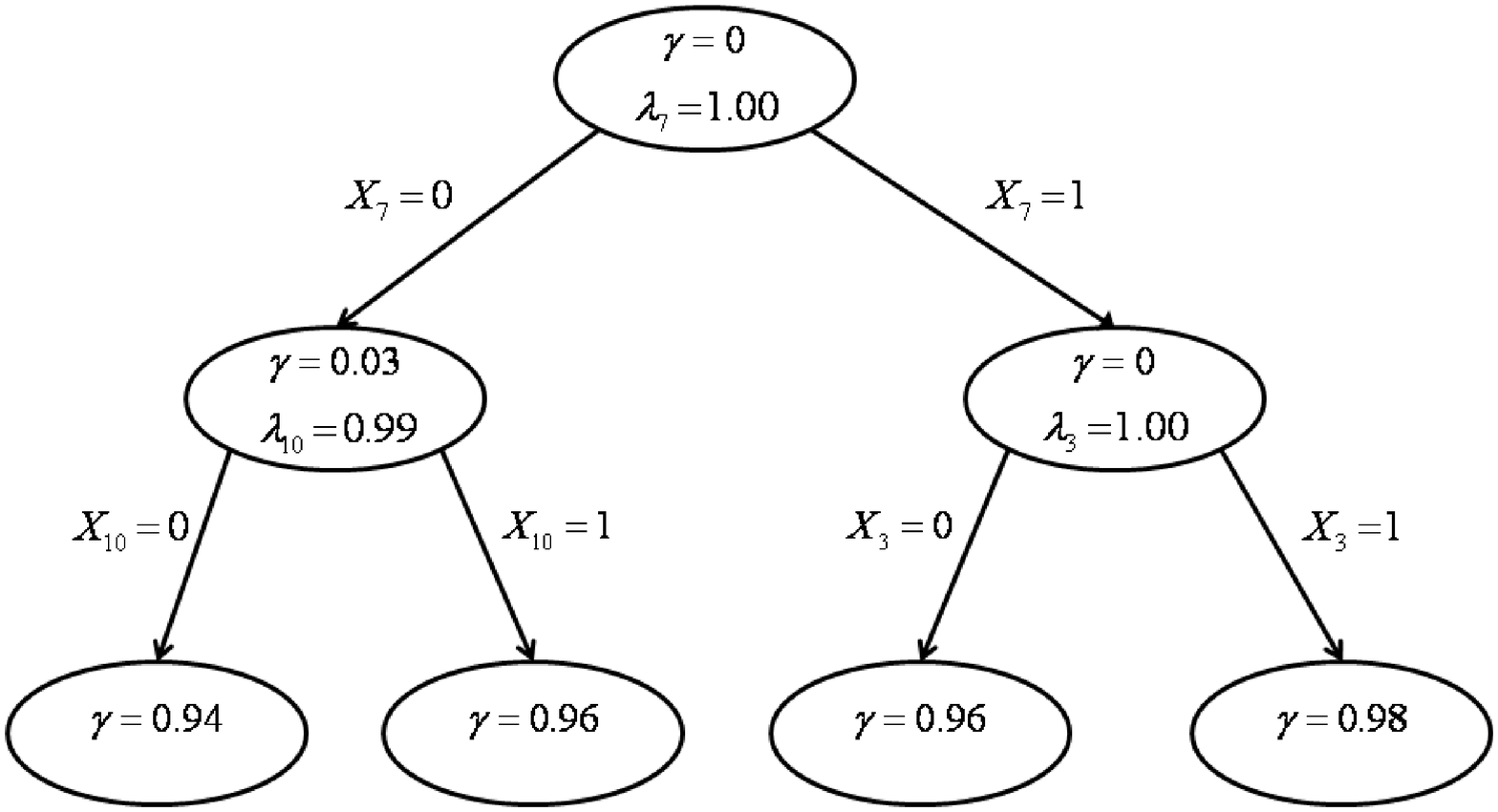}
    \caption{A typical hMAP coupling tree that recovers the underlying interactive
      structure.}
  \label{fig:ex_2toK_hmap}
\end{center}
\end{figure}
\label{ex:2toK_ROC}
\end{exam}

\section{Inference on distributional distances between two samples}
In some situations, one may be interested in a distance measure for the
two sample distributions. For example, if we let $d(Q_1,Q_2)$ denote the distance
between the two sample distributions under some metric $d$, one may
want to compute quantities such as $P(d(Q_1,Q_2) > T | \bx_1, \bx_2)$ where $T$ is some constant.
This can be achieved if one knows the posterior distribution of
$d(Q_1,Q_2)$ or can sample from it. We next show that if $(Q_1,Q_2)$ arises from
a co-OPT distribution, then for some common metrics, in particular $L_1$
and Hellinger distances, it is very convenient to sample from the
distribution of $d(Q_1,Q_2)$. 

As before, let $Q_1$ and $Q_2$ (with densities $q_1$ and
$q_2$ respectively) be the two distributions of interest. Suppose
$(Q_1,Q_2)$ have a co-OPT distribution, and so they can
be thought of as being generated from the
random-partitioning-and-assignment procedure introduced in the
previous section through the drawing
of the variables $C$, $J$, $\bthe_1$, $\bthe_2$, $C^b$, $J^b$ and $\bthe^b$. 
 Then we have the following result.
\begin{prop}
Suppose $(Q_1,Q_2)$ has a co-OPT distribution satisfying the
conditions given in Theorem~\ref{thm:coOPT}.
Let $\A(C,J)$ denote the (random) collection of all nodes on which $Q_1$ and $Q_2$ first
couple. (The notation indicates that it depends on the coupling
variables $C$ and $J$.) Also, let $d_{L_1}$ be the $L_1$ distance, and
$d_{H^2}$ the squared Hellinger distance. (That is,
$d_{L_1}(f,g)=\int|f-g|$ and $d_{H^2}(f,g)=\int
(\sqrt{f}-\sqrt{g})^2$.) Then 
\begin{align*}
d_{L_1}(Q_1, Q_2) &= \sum_{A\in \A(C,J)} |Q_1(A)-Q_2(A)|\\
d_{H^2}(Q_1,Q_2) &= \sum_{A \in \A(C,J)} (\sqrt{Q_1(A)}-\sqrt{Q_2(A)})^2.
\end{align*}
\label{prop:dist}
\end{prop}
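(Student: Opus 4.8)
The plan is to reduce both identities to a single measure-theoretic decomposition along the partition generated by the coupling procedure. A node $A$ belongs to $\A(C,J)$ precisely when $C(A)=1$ while $C(A')=0$ for every strict ancestor $A'$ of $A$ on the partition path from $\om$ to $A$; by construction no element of $\A(C,J)$ is an ancestor of another, so these nodes are pairwise disjoint as subsets of $\om$. Moreover, by Theorem~\ref{thm:coOPT} the undecided-after-$k$-levels region shrinks to $\mu$-measure zero almost surely, so $\bigcup_{A\in\A(C,J)}A$ equals $\om$ up to a $\mu$-null set; hence $\{A:A\in\A(C,J)\}$ is, almost surely, a countable measurable partition of $\om$ modulo a null set. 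On each $A\in\A(C,J)$ the coupling forces $Q_1(\cdot\,|A)=Q_2(\cdot\,|A)$, and by self-similarity this common conditional distribution---call its density $q_0^A(\cdot\,|A)$---is shared by $Q_1$ and $Q_2$ throughout $A$. Since the Dirichlet assignments are almost surely strictly positive, $Q_h(A)>0$ for every reached node $A$, and we obtain the $\mu$-a.e.\ representations $q_h = \sum_{A\in\A(C,J)}Q_h(A)\,q_0^A(\cdot\,|A)\,\I_A$ for $h=1,2$.

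Given this, both computations are a matter of splitting the defining integral over $\A(C,J)$ and using that $q_0^A(\cdot\,|A)$ integrates to $1$ on $A$. For the $L_1$ distance, fix $A\in\A(C,J)$; for $\mu$-a.e.\ $x\in A$ we have
\[
|q_1(x)-q_2(x)| = \big|Q_1(A)q_0^A(x\,|A)-Q_2(A)q_0^A(x\,|A)\big| = \big|Q_1(A)-Q_2(A)\big|\,q_0^A(x\,|A),
\]
so $\int_A|q_1-q_2|\,d\mu = |Q_1(A)-Q_2(A)|$, and summing over $A\in\A(C,J)$ yields $d_{L_1}(Q_1,Q_2)=\sum_{A\in\A(C,J)}|Q_1(A)-Q_2(A)|$. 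For the squared Hellinger distance the same bookkeeping gives, on each $A$,
\[
\big(\sqrt{q_1(x)}-\sqrt{q_2(x)}\big)^2 = \big(\sqrt{Q_1(A)}-\sqrt{Q_2(A)}\big)^2 q_0^A(x\,|A),
\]
whose integral over $A$ is $(\sqrt{Q_1(A)}-\sqrt{Q_2(A)})^2$; summing gives the second identity. Both series converge absolutely because $\sum_{A\in\A(C,J)}Q_h(A)=Q_h(\om)=1$ bounds each of them by $2$.

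The only step that is not purely formal---and the one I would treat as the crux---is the justification that $\A(C,J)$ exhausts $\om$ up to $\mu$-measure zero together with the interchange of $\int_\om$ and $\sum_{A\in\A(C,J)}$. The first half is exactly the convergence assertion of Theorem~\ref{thm:coOPT}, which under $\gamma(A),\rho(A)>\delta$ guarantees that the part of $\om$ on which coupling has not yet occurred after $k$ levels has $\mu$-measure tending to $0$ almost surely; the second half is then immediate from Tonelli's theorem since every integrand above is nonnegative. With these two points in place the argument is complete.
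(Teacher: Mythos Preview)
Your proof is correct and follows essentially the same route as the paper's: decompose the integral over $\om$ along the partition $\A(C,J)$, use the coupling $q_1(\cdot\,|A)=q_2(\cdot\,|A)$ to factor each piece, and dispose of the complement $\om\setminus\bigcup\A(C,J)$ as a $\mu$-null set via the convergence result in Theorem~\ref{thm:coOPT}. If anything, your write-up is more careful than the paper's---you make the density representation $q_h=\sum_A Q_h(A)q_0^A(\cdot\,|A)\I_A$ explicit, invoke Tonelli for the interchange, and actually carry out the Hellinger case rather than declaring it ``similar''.
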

\vspace{-3em}

\begin{proof}
See supplementary materials.
\end{proof}
This proposition provides a recipe for drawing samples from the
distributions of $d_{L_1}(Q_1,Q_2)$ and $d_{H^2}(Q_1,Q_2)$. One can first draw the coupling variables
$C$, $J$, $\bthe^1$ and $\bthe^2$. Then use $C$ and $J$ to find the
collection of nodes $\A(C,J)$, and use $\bthe^1$ and $\bthe^2$ to
compute, for each $A \in \A(C,J)$, the corresponding measures $Q_1(A)$
and $Q_2(A)$. Finally, one draw of $d_{L_1}$ (or $d_{H^2}$) can be
computed by summing
$|Q_1(A)-Q_2(A)|$ (or $(\sqrt{Q_1(A)}-\sqrt{Q_2(A)})^2$) over all
nodes in $\A(C,J)$. 

A particularly desirable feature of this procedure for sampling $L_1$ and
Hellinger distances is that one does not need to draw samples for the two random
distributions $Q_1$ and $Q_2$ to get their distances. In fact, one
only needs to draw the coupling variables, which characterize the {\em
  difference} between the two distributions, without having to draw
the base variables, which characterize the fine structure of the two
densities. Again, in multi-dimensional settings where estimating
densities is difficult, such a procedure can produce much less
variable samples for the distances.

We close this section with two more numerical examples, one in $\real$ and one in
$\real^2$. In the second of these, again we use the observed range of
the data in each dimension to define the space $\om$. Also, we use
1/10000 as the size cutoff for technical termination.
\begin{exam}[Two beta distributions] 
We simulate two samples from Beta(2,5) and Beta(20,15) under three
sets of sample sizes
$n_1=n_2=$10, 100 and 1000. We place a co-OPT prior on the two distributions
with the diadic partition rule and the symmetric parameter values as
specified in Example~\ref{ex:RtoK} with $\rho_0=\gamma_0=0.5$, and
compute the corresponding posterior co-OPT. Then we
draw 1000 samples for each of $d_{L_1}(Q_1,Q_2)$ and $d_{H^2}(Q_1,Q_2)$ from
their posterior distributions. The histograms of these samples are
plotted in \ref{fig:dist_1D}, where the vertical lines
indicate the actual $L_1$ and squared Hellinger distances between the two distributions.
\begin{figure}[!htb]
\begin{center}
    \leavevmode 
    \includegraphics[width=14cm]{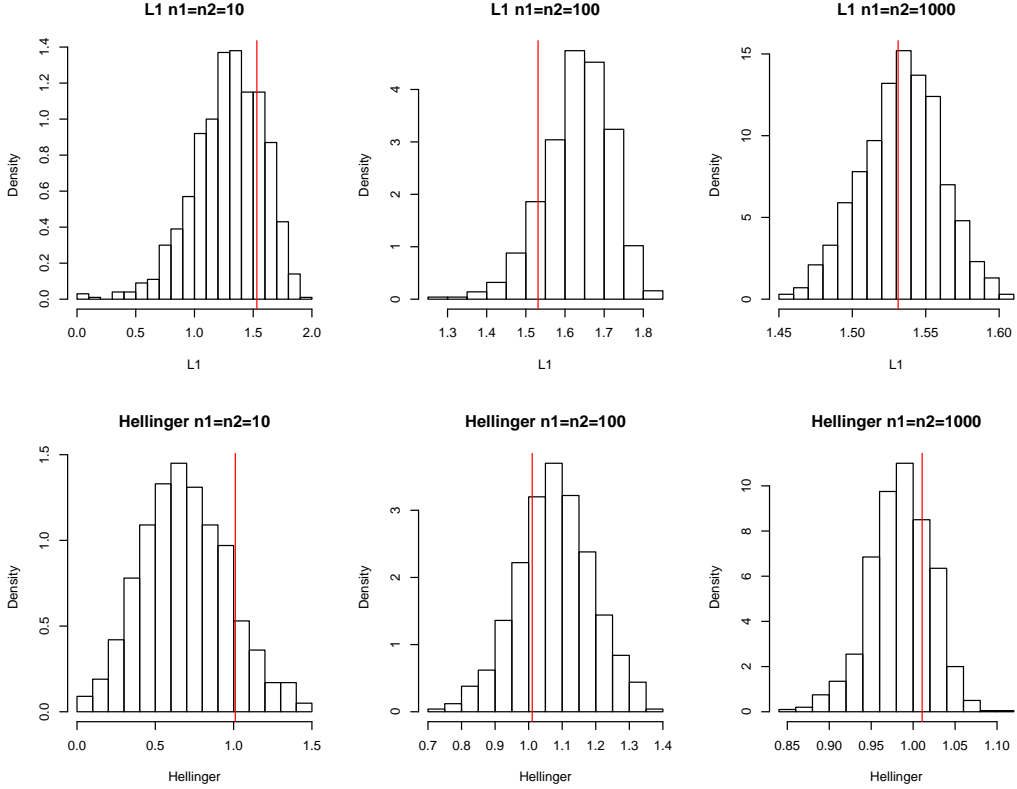}
    \caption{Histograms for posterior samples of $L_1$ and squared Hellinger
      distances for two samples from Beta(2,5) and Beta(20,15). The
      vertical lines indicate the actual $L_1$ and squared Hellinger
      distance between these two distributions.}
  \label{fig:dist_1D}
\end{center}
\end{figure}
\label{ex:dist_1D}
\end{exam}

\begin{exam}[Bivariate normal and mixture of bivariate normal]
We repeat the same thing as in the previous example except now we
simulate the two samples from the following distributions in
$\real^2$.

Sample~1 $\sim   BN\Biggl(\begin{pmatrix}
  0 \\ 0 \end{pmatrix},\begin{pmatrix} 4 & 0 \\ 0 &
  4 \\\end{pmatrix}\Biggr)$, and

Sample~2 $\sim 0.5\times BN\Biggl(\begin{pmatrix}
  1 \\ 1 \end{pmatrix},\begin{pmatrix} 1 & 0 \\ 0 &
  1 \\\end{pmatrix}\Biggr) + 0.5\times BN\Biggl(\begin{pmatrix}
  -1 \\ -1 \end{pmatrix},\begin{pmatrix} 1 & 0 \\ 0 &
  1 \\\end{pmatrix}\Biggr)$.\\

Again we draw 1000 posterior samples for $d_{L_1}(Q_1,Q_2)$ and for
$d_{H^2}(Q_1,Q_2)$ under each set of sample sizes. The histograms of
these samples are 
plotted in \ref{fig:dist_2D}, where the vertical lines again
indicate the actual $L_1$ and squared Hellinger distances between the two distributions.
\begin{figure}[!htb]
\begin{center}
    \leavevmode 
    \includegraphics[width=14cm]{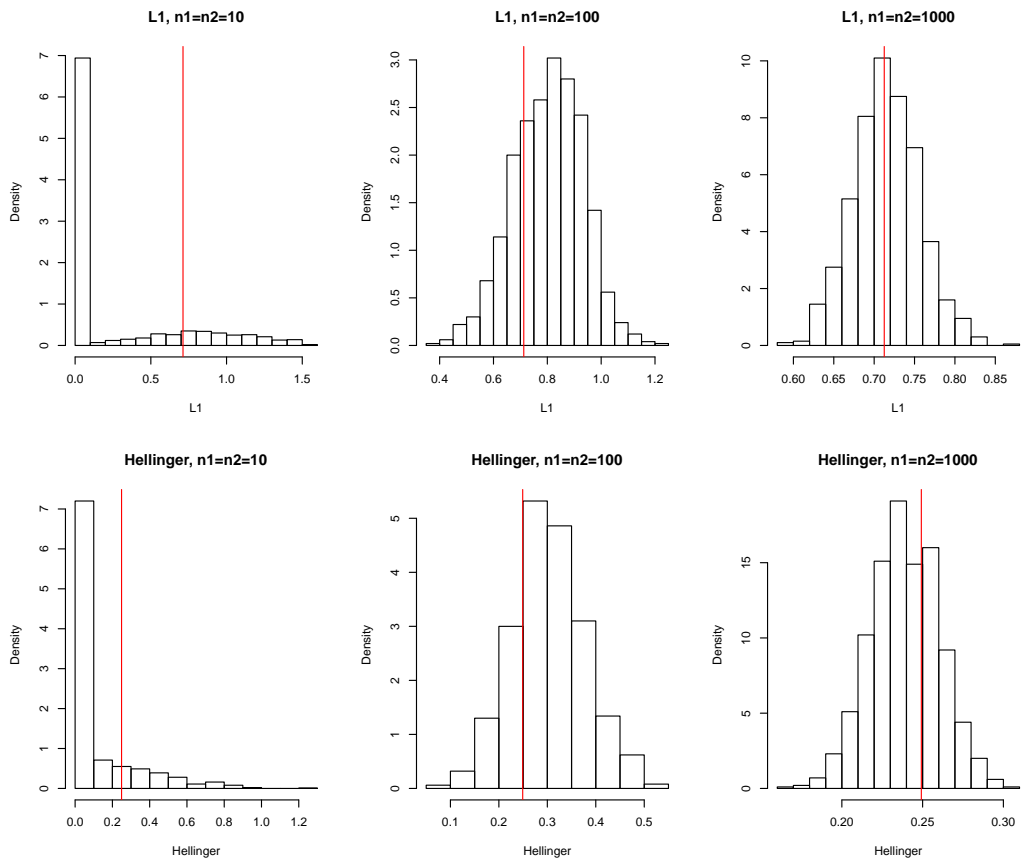}
    \caption{Histograms for posterior samples of $L_1$ and squared Hellinger
      distances for Example~\ref{ex:dist_2D}. The vertical lines
      indicate the actual $L_1$ and squared Hellinger distances for the two
      underlying distributions.}
  \label{fig:dist_2D}
\end{center}
\end{figure}
\label{ex:dist_2D}
\end{exam}

\section{Concluding remarks}
In this work we have introduced the coupling optional P\'olya tree prior for Bayesian nonparametric analysis 
on the two sample problem. This prior jointly generates two
random probability distributions that can ``couple''
on subsets of the sample space.  We have demonstrated that this
construction allows both the testing and the learning of the
distributional difference between the two samples. One can easily extend this
prior to allow the joint generation of more than two samples. For
example, if four samples are involved, then one can draw four, instead
of two,  independent Dirichlet vectors for probability assignment on each
uncoupled node. 

One interesting feature of the co-OPT prior (as well as the original
OPT prior) is that the corresponding
posterior can be computed ``exactly'' using the recursive
formulation given in \eqref{eq:coOPT_lik} without resorting to Markov Chain Monte
Carlo sampling. However, such ``exact inference''
based on recursions is still computationally
intensive, especially in high-dimensional problems. Efficient implementation is a
necessity for this method to be feasible for any non-trivial
problems. 
However, even with the most efficient
implementation, the exponential nature of the method dictates that 
approximation techniques such as $k$-step look-ahead as well as
large-scale parallelization are needed for very high dimensional
problems, such as those on a contingency table with 100 dimensions. Current work is undergoing in
this direction.     

\subsection*{Acknowledgment}
The authors want to thank Art Owen, David Siegmund, Hua Tang,
Robert Tibshirani, four referees, and the editors for very helpful
comments. LM is
supported by a Larry Yung Stanford Interdisciplinary Graduate Fellowship. WHW
is supported in part by NSF grant DMS-0906044. Much of the computation
is done on systems supported by NSF SCREMS Grant DMS-0821823 and NSF award CNS-0619926.

\subsection*{Supplemental materials}
\begin{description}
\item[Appendices] A.1 includes all the proofs. A.2
  gives the details about the prior specifications for the comparison made in
  Example~\ref{ex:R2}. A.3 gives the details about the specification
  of the dependent Dirichlet prior as well as the Gibbs sampler used
  for drawing posterior samples of $\epsilon$.
\end{description}

\section*{Appendix A.1. Proofs}
\begin{proof}[Proof of Theorem~1]
Consider the RPAA procedure described in Section~2 with the
uniform base distribution $u$ replaced
by $q_0$. So under this new procedure of generating a random measure
$Q$, whenever a region $A$ gets stopped, the conditional
distribution of $Q$ within $A$ is set to be $Q_0(\cdot |A)$. Let $Q^{(k)}$ be
the corresponding random distribution that is forced
to stop after $k$ levels of nested partitioning. In other words, for
all non-stopped nodes $A$ reached after $k$ levels of nested partitioning, we
stop dividing $A$ regardless of the
stopping variable $S(A)$ and force a conditional
distribution $Q_0(\cdot |A)$ on it to obtain $Q^{(k)}$. (For more
detail see the proof of Theorem~1 in \cite{wongandma:2010}.)

We first show that if
$\alpha^j_i(A)/\sum_{h=1}^{K^j(A)}\alpha^j_h(A)=Q_0(A^j_i)/Q_0(A)$,
then $EQ^{(k)}(B)=Q_0(B)$ for all $k$. For
$k \geq 0$, let $\J^{(k)}$ be the collection of all partition random variables
$S$ and $J$ drawn in the first $k$ levels of partitioning, and let
$\A(\J^{(k)})$ be the collection of all leaf nodes after $k$ levels of
random partitioning---those are the nodes that
are either just reached in the $k$th step or are reached earlier but stopped. We prove by
induction that $E\left(Q^{(k)}(B) | \J^{(k)}\right)=Q_0(B)$.   For
$k=0$, $\J^{(k)}=\emptyset$, $\A(\J^{(k)})=\{\om\}$, and $Q^{(0)}=Q_0$
and so $E\left(Q^{(k)}(B)|\J^{(k)}\right)=Q_0(B)$ holds trivially. Now
for $k \geq 1$, suppose this holds true for $1,2,\ldots,k-1$. By construction, 
\[Q^{(k)}(B)=\sum_{A\in \A(\J^{(k)})} Q^{(k)}(A)\frac{Q_0(B\cap
  A)}{Q_0(A)}. \] 
Let $A^p \in \A(\J^{(k-1)})$ be the parent node of $A$, that is, the
node whose division gives rise to $A$. Then by
the condition that
$\alpha^j_i(A)/\sum_{h=1}^{K^j(A)}\alpha^j_h(A)=Q_0(A^j_i)/Q_0(A)$, we
have \[E(Q^{(k)}(A)/Q^{(k)}(A^p)|\J^{(k)})=Q_0(A)/Q_0(A^p),\] 
and so
\begin{align*}
E\left(Q^{(k)}(B)|\J^{(k)}\right)&=E\left(\sum_{A\in \A(\J^{(k)})} Q^{(k)}(A)\frac{Q_0(B\cap
    A)}{Q_0(A)} \Big | \J^{(k)}\right)\\
&=\sum_{A\in \A(\J^{(k)})} \frac{Q_0(B\cap
    A)}{Q_0(A)} E\Biggl( Q^{(k)}(A) \Big | \J^{(k)}
  \Biggr)\\
&=\sum_{A\in \A(\J^{(k)})} \frac{Q_0(B\cap A)}{Q_0(A)} E\Biggl( \frac{Q^{(k)}(A)}{Q^{(k)}(A^p)}Q^{(k)}(A^p) \Big | \J^{(k)}
  \Biggr)\\
&=\sum_{A\in \A(\J^{(k)})}  \frac{Q_0(B\cap A)}{Q_0(A)}
  \frac{Q_0(A)}{Q_0(A^p)} E\Biggl( Q^{(k)}(A^p) \Big | \J^{(k)}
  \Biggr) \\
&=\sum_{A\in \A(\J^{(k)})}  \frac{Q_0(B\cap A)}{Q_0(A^p)}
   E\Biggl( Q^{(k-1)}(A^p) \Big | \J^{(k)}
  \Biggr)\\
&=\sum_{A\in \A(\J^{(k)})}Q_0(B\cap A)=Q_0(B).
\end{align*}
This shows that $E\left(Q^{(k)}(B)|\J^{(k)}\right)=Q_0(B)$ and thus
$EQ^{(k)}(B)=Q_0(B)$ for all $k$. But since
$|Q^{(k)}(B)-Q(B)| \rightarrow 0$ a.s.\!  (see the
proof of Theorem~1 in \cite{wongandma:2010}), by bounded convergence
theorem, we have
$E|Q^{(k)}(B)-Q(B)| \rightarrow 0$, and so $EQ(B)=Q_0(B)$.
\end{proof}

\begin{proof}[Proof of Theorem~2]
We first claim that with probability 1, $Q_1^{(k)}$ and $Q_2^{(k)}$
respectively converge in total variational distance to two absolutely
continuous random probability
measures $Q_1$ and $Q_2$, and thus for any Borel set $E$,
\begin{align*}
&|Q_1^{(k)}(E)-Q_1(E)| + |Q_2^{(k)}(E)-Q_2(E)| \\
\leq & \, sup_{E_1 \in
  \B}|Q_1^{(k)}(E_1)-Q_1(E_1)| + sup_{E_2 \in
  \B}|Q_1^{(k)}(E_2)-Q_1(E_2)|\rightarrow 0, \hbox{ w.p.1.}
\end{align*}
To prove the claim, we note that the marginal procedure that generates
$Q_1$, for instance, is simply an OPT with random local base measures
that arise from standard OPT distributions. To see this, we can think
of the generative procedure of $Q_1$ as consisting of the following
two steps.  
\begin{enumerate}
\item For each potential tree node $A$ under $\R$, we draw an
  independent random measure $Q_0^{A}$ from $OPT_{|A}(\R, \rho,
  \blam^b, \balp^b)$.  
\item Generate $Q_1$ from the standard
  random-partitioning-and-random-assignment procedure for an OPT,
  treating $\{C(A)\}$ as the stopping variables, $\{J(A)\}$ as the
  partition selector variables, and $\{\bthe^{J(A)}_1(A)\}$ as the
  probability assignment variables, and with $\{Q_0^{A}\}$ being the
  local base measures. That is, when a node $A$ is stopped, the conditional distribution $Q_1(\cdot |A)$ is set to be $Q_0^{A}(\cdot)$. 
\end{enumerate}
By Theorem~1 in \cite{wongandma:2010}, for each potential node $A$,
with probability 1, $Q_0^{A}$ is an absolutely continuous
distribution. Because the collection of all potential tree nodes $A$
under $\R$ is countable, with probability 1, this simultaneously holds
for all $Q_0^{A}$. Therefore, with probability 1, the marginal
procedure for producing $Q_1$ is just that for an OPT with
local base measures $\{Q_0^A\}$. The same argument for proving Theorem~1 in \cite{wongandma:2010} (with $\mu(\cdot |A)$ replaced by $Q_0^A(\cdot)$) shows that with probability 1, an absolutely continuous measure $Q_1$ exists as the limit of $Q_1^{(k)}$ in total variational distance. The same argument proves the claim for $Q_2$ as well.
 \end{proof}

\begin{proof}[Proof of Theorem~3]
Because any density function on $\om$ can be arbitrarily approximated
in $L_1$ by uniformly continuous ones, without loss of
generality, we can assume that $f_1$ and $f_2$ are uniformly continuous.
Let 
\[
\delta_1(\epsilon)=\sup_{|x-y|<\epsilon}|f_1(x)-f_1(y)| \quad
\text{and} \quad \delta_2(\epsilon)=\sup_{|x-y|<\epsilon}|f_2(x)-f_2(y)|.
\]
By uniform continuity, we have $\delta_i(\epsilon) \downarrow0$ as
$\epsilon\downarrow 0$ for $i=1,2$. Also, by Condition (1), for any
$\epsilon>0$, there exists a partition of $\om=\cup_{i=1}^{I} A_i$ such that the diameter of each $A_i$ is less than $\epsilon$. By Condition (2), there is positive probability that this partition will arise after a finite number of steps of recursive partitioning. Also because the parameters of the
co-OPT are all bounded away from 0 and 1, there is a positive
probability that the $A_i$'s are exactly the sets on which $Q_1$ and
$Q_2$ first couple. Now let $q^{A_i}$ be the local base measure on
each of $A_i$, we can write
\[
q_1(x) = \sum_{i=1}^{I} Q_1(A_i)q^{A_i}(x)\I_{A_i}(x) \quad \text{and}
\quad q_2(x) = \sum_{i=1}^{I} Q_2(A_i)q^{A_i}(x)\I_{A_i}(x).
\]
Accordingly,
\begin{align*}
&\,\,\,\,\,\int |q_1(x) - f_1(x)| d\mu(x) \\
&=\sum_{i=1}^{I}\int_{A_i}|Q_1(A_i)q^{A_i}(x) - f_1(x)| d\mu(x) \\
&\leq \sum_{i=1}^{I}Q_1(A_i)\int_{A_i}|q^{A_i}(x)-1/\mu(A_i)| d\mu(x) +
\sum_{i=1}^{I}\int_{A_i}|Q_1(A_i)/\mu(A_i) - f_1(x)| d\mu(x)\\
&\leq  \sum_{i=1}^{I} \int_{A_i}|q^{A_i}(x)\!-\!1/\mu(A_i)| d\mu(x)\!+\!
\sum_{i=1}^{I}|Q_1(A_i)\!-\! f_1^{i}\,\mu(A_i)| \!+\!
\sum_{i=1}^{I}\int_{A_i}|f_1^{i}\!-\! f_1(x)| d\mu(x)
\end{align*}
where $f_1^{j}:=\int_{A_i}f_1(x) d\mu(x) /\mu(A_i)$. 
By the exact same calculation we have
\begin{align*}
&\,\,\,\,\,\int |q_2(x) - f_2(x)| d\mu(x) \\
&\leq  \sum_{i=1}^{I} \int_{A_i}|q^{A_i}(x)\!-\! 1/\mu(A_i)| d\mu(x) \!+\!
\sum_{i=1}^{I}|Q_2(A_i) \!-\! f_2^{i}\,\mu(A_i)| + \sum_{i=1}^{I}\int_{A_i}|f_2^{i} \!-\! f_2(x)| d\mu(x)
\end{align*}
where $f_2^{j}:=\int_{A_i}f_2(x) d\mu(x) /\mu(A_i)$.
By the choice of
$A_i$, we have that 
$\int_{A_i}|f_1^{i} - f_1(x)| d\mu(x) \leq \delta_1(\epsilon)\mu(A_i)$
and $\int_{A_i}|f_2^{i} - f_2(x)| d\mu(x) \leq
\delta_2(\epsilon)\mu(A_i)$. Thus,
\[
\sum_{i=1}^{I}\int_{A_i}|f_1^{i} - f_1(x)| d\mu(x) \leq
\delta_1(\epsilon)\mu(\om) \quad \text{and} \quad \sum_{i=1}^{I}\int_{A_i}|f_2^{i} - f_2(x)| d\mu(x) \leq
\delta_2(\epsilon)\mu(\om).
\] 
So by choosing $\epsilon$ small enough, we can have
\[ \max\{\delta_1(\epsilon),\delta_2(\epsilon)\}\mu(\om) < \tau/3.\]
Next, because all the coupling parameters
of the co-OPT prior are uniformly bounded away from 0 and 1, (conditional on the
coupling partition) with positive
probability, we have
\[
|Q_1(A_i) - f_1^{i}\,\mu(A_i)| < \frac{\tau}{3\mu(\om)} \quad \text{and} \quad
|Q_2(A_i) - f_2^{i}\,\mu(A_i)| < \frac{\tau}{3\mu(\om)} 
\]
for all $i=1,2,\ldots,I$. Similarly, because all the base parameters
are also uniformly bounded away from 0 and 1, by Theorem~2 in
\cite{wongandma:2010}, (conditional on the 
coupling partition and probability assignments,) 
with positive probability we have 
\[
\int_{A_i}|q^{A_i}(x)-1/\mu(A_i)| d\mu(x) < \frac{\tau}{3\cdot 2^{i}}
\]
for all $i=1,2,\ldots,I$. Placing the three pieces together, we have
positive probability for $\int|q_1(x)-f_1(x)|d\mu<\tau$ and
$\int|q_2(x)-f_2(x)|d\mu<\tau$ to hold simultaneously.
\end{proof}

\begin{proof}[Proof of Proposition~5]
We prove the result only for $d_{L_1}$ as the proof for $d_{H^2}$ is very
similar. (All following equalities and statements hold with
probability 1.)
\begin{align*}
d_{L_1}(Q_1,Q_2) &= \int_{\om} |q_1(x)-q_2(x)| \mu(dx)\\
&=\sum_{A\in \A(C,J)}\int_{A}|q_1(x)-q_2(x)| \mu(dx) + \int_{\om
  \setminus \, \cup \A(C,J)}|q_1(x)-q_2(x)| \mu(dx).
\end{align*}
But for each $A\in \A(C,J)$, due to coupling we have $q_1(\cdot |A)=q_2(\cdot |A)$, and so
\begin{align*}
\int_{A}|q_1(x)-q_2(x)|\mu(dx) &= \int_{A} |Q_1(A)-Q_2(A)| q_1(x|A)
\mu(dx)\\
&=|Q_1(A)-Q_2(A)|.
\end{align*}
On the other hand, $Q_1(\om \setminus \cup\A(C,J))=Q_2(\om \setminus
\cup\A(C,J))=\mu(\om \setminus \cup\A(C,J))=0$ w.p.1. (See proof of
Theorem~1 in \cite{wongandma:2010}.) Therefore,
\[
d_{L_1}=\sum_{A \in \A(C,J)} |Q_1(A)-Q_2(A)|.
\]
\end{proof}

\section*{Appendix A.2. Prior specifications for Example~4}
For the P\'olya tree two sample test \cite{holmes:2009}, we have
imposed that each tree node is partitioned  
in the middle of both dimensions at each level. Therefore for our
example in $\real^2$, each node has four children. We also impose that the prior pseudo-counts
$\alpha$ are 0.5 for all children. The software used in this paper for
this method is written by us.

On the other hand, we used R package {\tt DPpackage} function {\tt
  HDPMdensity} to fit the Dirichlet Process mixture (DPM) model proposed in \cite{muller2004}. More
specifically, the two distributions are
modeled as.
\begin{align*}
F_1 &= \epsilon H_0 + (1-\epsilon) H_1 \\
F_2 &= \epsilon H_0 + (1-\epsilon) H_2,
\end{align*}
where $H_0$ models the common part of $F_1$ and $F_2$, whereas $H_1$
and $H_2$ the unique parts. The parameter $\epsilon$ captures the
proportion of``commonnes'' between the two distributions, and thus can
serve as a measure of how the two differ. Each of the $H_i$ for
$i=0,1,2$ is modeled as a Dirichlet Process mixture of normals.
\begin{align*}
H_i(\cdot) = \int \phi(\cdot| \mu,\Sigma) d G_i(\mu),
\end{align*}
where
\[
G_i | \alpha_i, G_0 \sim DP(\alpha_i,G_0).
\]
The baseline distribution $G_0$ is assumed to be
$Normal(\mu_0,\Sigma_0)$. Following the example given by {\tt
  DPpackage}, the (empirical) hyperprior specifications are
\begin{align*}
\epsilon &\sim 0.1\delta_0 + 0.1 \delta_1 + 0.8\, Unif[0,1],\\
\alpha_i &\sim Unif(0,1) \text{ for $i=0,1,2$.}\\
\Sigma_0|\mu_0,T_0 &\sim InverseWishart(\mu_0=9,T_0=\Var(\by))\\
\mu_0|m_0,S_0 &\sim N(m_0=mean(\by),S_0=\Var(\by))\\
\Sigma | \nu, T &\sim InverseWishart(\nu = 9, T=0.25\Var(\by)),
\end{align*}
where $\by$ is the combination of the two samples, $mean(\cdot)$ is
the dimension-wise average, and $\Var$ is the covariance. The
statistic we use to measure two sample difference (or similarity) is
the posterior mean of $\epsilon$, estimated by the mean of the MCMC
sample of size 10,000, with 10,000 burn-in steps.  

\section*{Appendix A.3. The dependent Dirichlet prior in Example~5}
Motivated by the hierarchical Dirichlet process mixture
prior setup introduced in \cite{muller2004}, we can design the
following prior for $(Q_1,Q_2)$ on the finite support of a contingency
table.
\[\left\{ \begin{array}{l}
Q_1 = \epsilon H_0 + (1-\epsilon) H_1\\
Q_2 = \epsilon H_0 + (1-\epsilon) H_2
\end{array}\right.
\]
with 
\begin{align*}
H_0, H_1, H_2 &\sim_{i.i.d} Dirichlet({\bm \alpha_H})\\
\epsilon &\sim Beta(a_{\epsilon},b_{\epsilon}).
\end{align*}
We used ${\bm \alpha_H}=(0.5,0.5,\ldots,0.5)$ and
$a_{\epsilon}=b_{\epsilon}=3$ as the prior parameters. We found that
restricting the support of $\alpha_H$ to the observed table cells
rather than the entire table significantly improves the power of the
method. This is due to the sparsity of the table counts---the vast
majority of the table cells are empty.

To draw posterior samples of $\epsilon$, we use the following Gibbs
sampler. First some notations. Let ${\bm X^1}=\{X^1_1,X^1_2,\ldots,
X^1_{n_1}\}$ and ${\bm X^2}=\{X^2_1,X^2_2,\ldots, X^2_{n_2}\}$ denote the two
sample observations. For each observation $X^i_j$ in sample $i=1$ or 2, we
introduce a Bernoulli variable $J^i_j$ that serves as an indicator for
whether $X_i^j$ has come for $H_i$ or $H_0$. Given $\epsilon$, the
$J^i_j$'s are i.i.d. $Bernoulli(\epsilon)$ variables. For simplicity,
we denote $(J^i_1,J^i_2,\ldots, J^i_{n_i})$ as ${\bm J^i}$. Given
$\bm{J}^i$ we let 
\[
\bX^{i,0} = \{X^i_j: J^i_j = 0, j=1,2,\ldots,n_i\} \quad \text{ and }
\quad \bX^{i,1} = \{X^i_j: J^i_j = 1, j=1,2,\ldots,n_i\}
\]
for $i=1,2$. In addition, we let $\bn(\bX^i)$ be the table counts of
of sample $i$ in the support of $\balp_H$, and similarly define
$\bn(\bX^{i,0})$ and $\bn(\bX^{i,1})$. With these notations, now
we next write
down the conditional distributions of $H_0$, $H_1$, $H_2$, $\epsilon$,
$\bj^1$ and $\bj^2$. 
\begin{align*}
H_0|\bX^1,\bX^2,\bj^1,\bj^2,\epsilon,H_1,H_2 &\sim Dirichlet(\balp_H
+ \bn(\bX^{1,0})+\bn(\bX^{2,0}))\\
H_1|\bX^1,\bX^2,\bj^1,\bj^2,\epsilon,H_0,H_2 &\sim Dirichlet(\balp_H
+ \bn(\bX^{1,1}))\\
H_2|\bX^1,\bX^2,\bj^1,\bj^2,\epsilon,H_0,H_1 &\sim Dirichlet(\balp_H
+ \bn(\bX^{2,1}))\\
J^1_j | \bX^1,\bX^2,\bj^1_{(-j)},\bj^2,\epsilon,H_0,H_1,H_2 &\sim
Bernoulli\left(\frac{(1-\epsilon)p_{H_1}(X^1_j)}{\epsilon
    p_{H_0}(X^1_j)+(1-\epsilon)p_{H_1}(X^1_j)} \right)\\
J^2_j | \bX^1,\bX^2,\bj^1,\bj^2_{(-j)},\epsilon,H_0,H_1,H_2 &\sim
Bernoulli\left(\frac{(1-\epsilon)p_{H_1}(X^1_j)}{\epsilon
    p_{H_0}(X^1_j)+(1-\epsilon)p_{H_1}(X^1_j)} \right)\\
\epsilon|\bX^1,\bX^2,\bj^1,\bj^2,H_0,H_1,H_2 &\sim Beta\left(a_\epsilon +
\sum_{i=1}^{2}\sum_{j=1}^{n_i} J^i_j,\,\, b_\epsilon + n_1 +
n_2 - \sum_{i=1}^{2}\sum_{j=1}^{n_i} J^i_j\right).
\end{align*}
We use this Gibbs sampler to draw posterior samples for $\epsilon$. We compute the posterior mean of $\epsilon$ from $10,000$ samples with $10,000$ burn-in iterations.

\bibliography{coupled_opt}

\end{document}